\documentclass[twocolumn]{autart}  

\usepackage{amssymb,latexsym,amsfonts,amsmath}

\usepackage{wrapfig} % for wrapping text around figures

\newenvironment{authorbio}[2][]{%
	\par\addvspace{1em}%
	\noindent
	\if\relax\detokenize{#1}\relax % check if photo is empty
	\else
	\begin{wrapfigure}{l}{0.8in} % narrower than image width
		\vspace{-5pt} % tweak vertical alignment if needed
		\includegraphics[width=1in,height=1.25in,clip,keepaspectratio]{#1}
	\end{wrapfigure}%
	\fi
	\noindent\textbf{#2} % NAME
}{\par\addvspace{2em}}
\usepackage{mathrsfs}

\usepackage{natbib}

\usepackage{xspace}

\usepackage{algorithm}
\usepackage{algorithmic}
\usepackage{graphicx}

\usepackage{multirow}
\usepackage{booktabs}
\usepackage[dvipsnames]{xcolor}

\usepackage{url}  

\makeatletter
\edef\endfrontmatter{%
	\unexpanded\expandafter{\endfrontmatter}% original code
	\noexpand\endNoHyper                  % add this to re-enable links
}
\makeatother

\DeclareRobustCommand\sampleline[1]{%
	\tikz\draw[#1] (0,0) (0,\the\dimexpr\fontdimen22\textfont2\relax)
	-- (2em,\the\dimexpr\fontdimen22\textfont2\relax);%
}

\usepackage{caption}

\usepackage{paralist}
\usepackage{dsfont}
\usepackage{caption} 
\usepackage{booktabs}
\usepackage{eso-pic}
\usepackage{epsfig}
\usepackage{mathtools}
\usepackage {url}
\usepackage{epstopdf}
\usepackage{tikz}
\usepackage{IEEEtrantools}
\usetikzlibrary{arrows,calc,decorations.markings,math,arrows.meta}
\newtheorem{theorem}{Theorem}[section]
\newtheorem{lemma}[theorem]{Lemma}
\newtheorem{problem}[theorem]{Problem}
\newtheorem{proposition}[theorem]{Proposition}

\newtheorem{definition}[theorem]{Definition}
\newtheorem{example}[theorem]{Example}

\newtheorem{remark}[theorem]{Remark}

\numberwithin{equation}{section}

\usepackage{dsfont}
\usepackage{caption}

\usepackage[many]{tcolorbox}
\usetikzlibrary{calc}
\tcbuselibrary{skins}

\newtcolorbox{resp}[1][]{%
	enhanced jigsaw,%
	colback=gray!5!white,%
	colframe=gray!80!black,%
	size=small,%
	boxrule=1pt,%
	halign title=flush center,%
	coltitle=black,%
	breakable,%
	drop shadow=black!50!white,%
	attach boxed title to top left={xshift=1cm,yshift=-\tcboxedtitleheight/2,yshifttext=-\tcboxedtitleheight/2},%
	minipage boxed title=3cm,%
	boxed title style={%
		colback=white,%
		size=fbox,%
		boxrule=1pt,%
		boxsep=2pt,%
		underlay={%
			\coordinate (dotA) at ($(interior.west) + (-0.5pt,0)$);
			\coordinate (dotB) at ($(interior.east) + (0.5pt,0)$);
			\begin{scope}[gray!80!black]
				\fill (dotA) circle (2pt);
				\fill (dotB) circle (2pt);
			\end{scope}
		}%
	},%
	#1%
}

\usepackage{subcaption}
\newcommand{\EE}{\mathds{E}}
\newcommand{\PP}{\mathds{P}}

\usepackage[bb=boondox]{mathalfa}

\usepackage{enumitem}% http://ctan.org/pkg/enumitem

\makeatletter
\long\def\@maketablecaption#1#2{\@tablecaptionsize
    \global \@minipagefalse
    \hbox to \hsize{\parbox[t]{\hsize}{\centering #1 \\ #2}}}

\makeatother

\begin{document}

\begin{frontmatter}
\title{Safety Controller Synthesis for Stochastic Polynomial Time-Delayed  Systems}
\author{Omid Akbarzadeh}\ead{omid.akbarzadeh@newcastle.ac.uk},
\author{MohammadHossein Ashoori}\ead{m.ashoori2@newcastle.ac.uk},
\author{Amy Nejati}\ead{amy.nejati@newcastle.ac.uk}, and  
\author{Abolfazl Lavaei}\ead{abolfazl.lavaei@newcastle.ac.uk}  
\address{School of Computing, Newcastle University, United Kingdom}                           

\begin{keyword}  
Stochastic time-delayed systems, Krasovskii control barrier certificates, input constraints, probabilistic safety, formal methods
\end{keyword}                                  

\begin{abstract}  
This work develops a theoretical framework for safety controller synthesis in discrete-time \emph{stochastic} nonlinear polynomial systems subject to \emph{time-invariant delays} (dt-SNPS-td). While safety analysis of stochastic systems using control barrier certificates (CBC) has been widely studied, developing safety controllers for stochastic systems with time delays remains largely unexplored. The main challenge arises from the need to account for the influence of delayed components when formulating and enforcing safety conditions. To address this, we employ \emph{Krasovskii} control barrier certificates, which extend the conventional CBC framework by augmenting it with an additional summation term that captures the influence of delayed states. This formulation integrates both the current and delayed components into a unified barrier structure, enabling safety synthesis for stochastic systems with time delays. The proposed approach synthesizes safety controllers under \emph{input constraints}, offering probabilistic safety guarantees robust to such delays: it ensures that all trajectories of the dt-SNPS-td remain within the prescribed safe region while fulfilling a quantified probabilistic bound. To achieve this, our method reformulates the safety constraints as a sum-of-squares optimization program, enabling the systematic construction of \emph{Krasovskii} CBC together with their associated safety controllers. We validate the proposed framework through three case studies, \emph{including two physical systems}, demonstrating its effectiveness and practical applicability.
\end{abstract}

\end{frontmatter}
\section{Introduction}\label{sec:Intro}
Time delays are inherent in modern engineering systems, arising from sensing and actuation latencies, computational and scheduling overheads, and communication over shared networks. Such delays are ubiquitous in domains including communications~\citep{Seborg2004} and process control~\citep{Srikant2004}, where even minor timing inconsistencies can critically impact the safety and reliability of system operation. As control and decision-making processes become increasingly distributed and network-dependent, the occurrence of time delays is expected to rise. While often unavoidable, such delays can profoundly alter system behavior: narrowing stability margins, degrading performance, and potentially violating safety constraints that would otherwise hold in delay-free settings.  Within this broad domain, research on safety analysis of discrete-time stochastic nonlinear systems with delays remains limited due to the inherent technical challenges involved. Specifically, the synthesis of controllers for such systems is particularly challenging, not only because of the presence of time delays, but also due to the stochastic dynamics that encapsulate uncertainties within the system model.

To tackle safety synthesis for stochastic control systems \emph{without} time delays, existing literature has relied on employing finite abstractions to approximate the original models with simpler representations featuring discrete state sets (see \emph{e.g.,}~\cite{ref1,julius2009approximations,zamani2014symbolic,lavaei2020compositional,lavaei2022automated}). While promising, these abstraction-based methods heavily rely on discretizing the state and input spaces, which often becomes impractical for real-world systems due to the resulting state-explosion problem. In light of this limitation, recent years have witnessed the development of compositional techniques that construct finite abstractions of high-dimensional stochastic systems by systematically combining abstractions of lower-dimensional subsystems (see \emph{e.g.,}~\cite{hahn2013compositional,ref3,nejati2020compositional,lavaei2022scalable,lavaei2022dissipativity}).

As an alternative for enforcing safe controller synthesis in stochastic systems \emph{without} time delays and avoiding state-space discretization, a prominent line of research leverages \emph{control barrier certificates} (CBC), initially introduced {by~\cite{prajna2004safety}}. Analogous to Lyapunov functions, CBC impose conditions on both the function and its evolution along system trajectories. By selecting an initial level set of CBC corresponding to a prescribed set of initial states, one obtains a separation between the unsafe region and all admissible evolutions, thereby providing (probabilistic) safety guarantees~\citep{santoyo2021barrier,nejati2024context}. This framework has been extensively applied to formal verification and controller synthesis in both deterministic and stochastic settings, although delays have not been considered in these studies~\citep{borrmann2015control,ames2019control,zaker2024compositional,jahanshahi2022compositional,lavaei2024scalable,nejati2024reactive}.

\textbf{Related literature on time-delayed systems.} There exists some limited work on safety analysis of dynamical systems with delays; however, these studies are primarily developed for \emph{deterministic} continuous-time models. In this regard, the study {by~\cite{1582846} extends} barrier certificates to time-delay systems using Razumikhin and Krasovskii functions for deterministic nonlinear polynomial dynamics, focusing mainly on safety verification rather than controller synthesis. Another study {by~\cite{Ames_Safety} introduces} safety functionals to certify safety for deterministic time-delay systems without relying on delay approximations. Furthermore, the work {by~\cite{Ames_Safety_1} develops} control barrier functions for deterministic nonlinear systems with state delays, enabling the design of safety filters.
Another study {by~\cite{liu2023safety} extends} barrier-functional methods to time-delay systems with disturbances by introducing the concept of input-to-state safety. In contrast to the aforementioned studies, the crucial influence of \emph{stochastic disturbances} on system safety has yet to be addressed.  A safety controller synthesis method for stochastic networked systems under communication constraints, including delays and packet drops, is recently proposed by~\cite{akbarzadeh2025safety}. It is worth mentioning that there also exists a body of literature addressing the stability of control systems with time delays, which lies beyond the scope of this work (see \emph{e.g.,}~\cite{Safe-Stabilization,1583086,REN2022110563,Fridman}). A data-driven design of Krasovskii control barrier certificates for deterministic uncertain systems is recently studied by~\cite{akbarzadeh2026data}.

\textbf{Key contributions.} Motivated by the discussed challenges, we introduce Krasovskii \emph{quadratic} control barrier certificates (K-QCBC) and Krasovskii \emph{polynomial} control barrier certificates (K-PCBC) for discrete-time \emph{stochastic} nonlinear polynomial systems with time-invariant \emph{delays} (dt-SNPS-td), capturing the joint influence of the \emph{current state} and the \emph{$h$-step history} on system safety. Our proposed framework addresses the problem of designing safety controllers under \emph{input constraints}, which is {non-convex} due to enforcing the underlying constraint on the input, and resolves the resulting bilinearity within a tractable synthesis formulation.  We propose dedicated conditions formulated as a sum-of-squares (SOS) optimization problem, through which one can jointly compute the safety controller and the corresponding K-QCBC or K-PCBC. This formulation provides probabilistic safety guarantees that are robust to system delays.

{\textbf{Organization.} The remainder of the paper is organized as follows. Section \ref{sec:Problem} formally defines the dt-SNPS-td model and outlines the safety specifications adopted throughout the paper. Section \ref{sec:Krasovskii} develops the K-QCBC and derives a finite-horizon probabilistic safety guarantee. It further reformulates the safety conditions as a tractable SOS optimization problem and presents the corresponding implementable synthesis algorithm. Section~\ref{subsec:problem_G} introduces the K-PCBC framework, casts the resulting safety requirements into an SOS optimization program, and describes the associated algorithmic procedure for practical implementation. Section \ref{sec:Case} validates the proposed framework through three case studies, while Section \ref{sec:Conclusion} concludes the paper with a summary of findings and potential directions for future research.}

\textbf{Notation and Preliminaries.}
We denote the sets of real, positive real, and non-negative real numbers, respectively, by $\mathbb{R}, \mathbb{R}^{+}$, and $\mathbb{R}_0^{+}$. Sets of nonnegative and positive integers are denoted by $\mathbb{N}$ and $\mathbb{N}^{+}$. Given $N$ vectors $x_i \in \mathbb{R}^{n_i}, x=\left[x_1 ; \ldots ; x_N\right]$ is a concatenated \emph{column} vector.
Trace of a matrix $A \in \mathbb{R}^{N \times N}$ with diagonal elements $a_1, \ldots, a_N$ is written as $\mathsf{Tr}(A)=$ $\sum_{i=1}^N a_i$. A diagonal matrix in $\mathbb{R}^{n \times n}$ with diagonal entries $a_1, \ldots, a_n$ is signified by $\operatorname{diag}\left(a_1, \ldots, a_n\right)$.  {An identity matrix of dimension $n\times n$, and a zero matrix or vector with compatible size are denoted, respectively, by $\mathbf{I}_n$ and $\mathbf{0}$}. A normal distribution with the mean $\mu$ and the {covariance matrix $\Sigma$ is denoted by $\mathcal{N}(\mu, \Sigma)$}. For a \emph{symmetric} matrix ${P}, {P} \succ 0$ (${P} \succeq 0$) denotes ${P}$ is positive definite (positive semi-definite). A star $(\star)$ in a symmetric matrix presents the transposed element in the symmetric position. The transpose of a matrix $A$ is written as $A^{\top}$.  The empty set is denoted by $\emptyset$. The Cartesian product of two sets $Q$ and $V$ is denoted by $Q \times V$. For a positive integer $n \in \mathbb{N}^+$, the notation $Q^{n+1}$ represents the $(n+1)$-fold Cartesian product of $Q$ with itself. For the set $S \subseteq Q$, the expression $Q \backslash S$ denotes the set of elements in $Q$ not in $S$. Given a system $\Sigma$ and a {property $\Upsilon, \Sigma \models_{\mathcal{T}} \!\! \Upsilon$ denotes that $\Sigma$ satisfies $\Upsilon$ over the time horizon $\mathcal{T}$.}

By ($\Omega, \mathbb{F}_{\Omega}, \PP_{\Omega}$), we denote a probability space of events, where $\Omega$ is the sample space, $\mathbb{F}_{\Omega}$ is a sigma-algebra on $\Omega$, and $\PP_{\Omega}$ is a probability measure. We assume random variables $\mathbb{X}$ are measurable functions, \emph{i.e.,} $\mathbb{X}\!\!:\left(\Omega, \mathbb{F}_{\Omega}\right) \rightarrow\left(S_{\mathbb{X}}, \mathbb{F}_{\mathbb{X}}\right)$, such that $\mathbb{X}$ induces a probability measure on $\left(S_{\mathbb{X}}, \mathbb{F}_{\mathbb{X}}\right)$ as $\operatorname{Prob}\{\mathds{A}\}=\PP_{\Omega}\left\{\mathbb{X}^{-1}(\mathds{A})\right\}$ for any $\mathds{A} \in \mathbb{F}_{\mathbb{X}}$. A topological space $\mathbf{S}$ is said to be Borel, denoted by $\mathbb{B}(\mathbf{S})$, if it is homeomorphic to a Borel subset of a Polish space, \emph{i.e.,} a separable and metrizable space.

\section{Problem Formulation}\label{sec:Problem}
\subsection{Stochastic Nonlinear Polynomial Time-delayed Systems}
We begin by formally introducing discrete-time stochastic nonlinear polynomial systems with time-invariant delays as the primary model for which we aim to derive a safety certificate and the corresponding controller.

\begin{definition}[\textbf{dt-SNPS-td}]\label{def:dt-SNPS-td}
	A discrete-time stochastic nonlinear polynomial system with time-invariant delays (dt-SNPS-td) is defined by
	\begin{align}\notag
		\Sigma\!:\! x_{k+1} &= A(x_k, x_{k-h}) x_k \!+\! A_1(x_k, x_{k-h}) x_{k-h}\\\label{eq:dt-SNPS-td} &~~~ +  G (x_k, x_{k-h}) u_k \!+\! Ew_k,
	\end{align}
	where
	\begin{itemize}[leftmargin=*]
		\item  $h \in \mathbb N^{+}$ is time-invariant delay;
		\item $x_k, x_{k-h} \in X$, with $k \in \mathbb N$, denote the current and delayed states of the system, respectively, while $X \subseteq \mathbb{R}^n$ is a Borel space representing the state set;
		\item $\mathbf{x}_0 = ({x}_{0}, \ldots,{x}_{-h}) \in {X}^{h+1}$ is the given initial state history; 
		\item $u_k \in  U$  is the control input, with $ U$ being a Borel space representing the admissible control set subject to \emph{input constraints}, defined as
		\begin{equation}\label{input_set}
			U=\left\{u \in \mathbb{R}^m  \,\,\,\big|\,\,\, b_j^{\top} u \leq 1, j=1, \ldots, J\right\}\!, \quad b_j \in \mathbb{R}^m,
		\end{equation}
		\item $w_k \in  \mathbb{R}^n$  denotes a sequence of independent and identically distributed (i.i.d.) random variables from a {standard} normal distribution $\mathcal{N}\left({\mathbf{0}, \mathbf{I}_n}\right)$, representing the system's process noise;
		\item $A(x_k, x_{k-h}) \in \mathbb R^{n\times n}$, $A_1(x_k, x_{k-h}) \in \mathbb R^{n\times n}$, and  $G (x_k, x_{k-h}) \in \mathbb R^{n\times m}$ are all \emph{polynomial} matrices, while $E \in  \mathbb R^{n\times n}$ is a constant matrix.
	\end{itemize}
\end{definition}
We consider a state-feedback controller of the form
\begin{equation}\label{feedback}
	u_k= F(x_k, x_{k-h}) x_k + F_1(x_k, x_{k-h}) x_{k-h},
\end{equation}
where $F(x_k, x_{k-h}) \in \mathbb R^{ m \times n}$ and $ F_1(x_k, x_{k-h}) \in \mathbb R^{m \times n}$ are \emph{polynomial} matrices to be designed as part of this work. For simplicity of presentation, we define $\mathcal A = A(x_k, x_{k-h})$, $\mathcal A_1=A_1(x_k, \allowbreak x_{k-h})$, $\mathcal{G}=G (x_k, x_{k-h})$, $\mathcal{F} =F(x_k, x_{k-h})$ and $\mathcal{F}_1 =F_1(x_k, x_{k-h})$ throughout the rest of the paper. We represent dt-SNPS-td in~\eqref{eq:dt-SNPS-td} using the tuple $\Sigma = \left({\mathcal A, \mathcal A_1,  \mathcal{G},\allowbreak E, X, \allowbreak U, h}\right)$. We denote by {$\mathbf{x}_{\mathbf{x}_0uw}(k)$} the 
solution process of dt-SNPS-td $\Sigma$ in Definition~\ref{def:dt-SNPS-td} at time $k \in \mathbb{N}$, together with its $h$-step history, under input and process noise trajectories $u(\cdot)$ and $w(\cdot)$, starting from an initial state history of $\mathbf{x}_0 \in X^{h+1}$.

\begin{remark}
	It is worth noting that the class of systems in Definition~\ref{def:dt-SNPS-td} is restricted to nonlinear polynomial dynamics, as the safety conditions (cf.~\eqref{eq:CBC}) are eventually reformulated into an SOS program (cf. Lemmas~\ref{SOS_K-QCBC}, \ref{sos}) and solved using \textsf{SOSTOOLS}~\citep{prajna2004sostools}. The proposed results, however, can be extended to more general nonlinear systems of the form \(x_{k+1} = f(x_k, x_{k-h}, \allowbreak u_k) + w_k\) which could be solved using nonlinear optimization tools such as \textsf{IPOPT}~\citep{waechter2006ipopt} and \textsf{SNOPT}~\citep{gill2005snopt}. The same reasoning applies to the stochastic component, where assuming a normal noise distribution simplifies the computations in our conditions. The results, however, can be extended to any arbitrary noise distributions, in which case the corresponding expected value (cf.~\eqref{subeq:decreasing}) should be evaluated with respect to the chosen distribution when solving the problem.
\end{remark}

The following illustrative example demonstrates that the dt-SNPS-td $\Sigma$, introduced in Definition~\ref{def:dt-SNPS-td}, can indeed represent a broad class of discrete-time stochastic nonlinear polynomial systems with time-invariant delays, where coupling may occur between \emph{delayed and non-delayed} states and inputs.

\begin{example}\label{example}
	Consider the following two-dimensional system with a time-invariant delay \(h\):
	\begin{align*}
		{x}_{(k+1)_1} &= x_{k_1}x_{k_2} + 2x_{(k-h)_1}x^2_{k_2} - x_{(k-h)_1} + x_{k_1} u_k\\ &~~~ + 0.4w_{k_1},\\
		{x}_{(k+1)_2} &= -x_{k_1} + x^2_{(k-h)_2}+ x_{(k-h)_2} u_k +0.5w_{k_2},
	\end{align*}
	where \(x_k=[x_{k_1},x_{k_2}] \in X \subseteq \mathbb R^2\), \(u_k\in  U\subset\mathbb R\), and {$w_{k}=[w_{k_1},w_{k_2}]\sim\mathcal N(\mathbf{0},\mathbf{I}_2)$}.
	The dynamics take the dt-SNPS-td form of Definition~\ref{def:dt-SNPS-td} as
	\begin{align*}
		{x}_{k+1} &=
		\overbrace{\begin{bmatrix}
				x_{k_2} & x_{(k-h)_1} x_{k_2}\\
				-1 & 0
		\end{bmatrix}}^{A(x_k, x_{k-h})}
		\overbrace{\begin{bmatrix}
				x_{k_1}\\
				x_{k_2}
		\end{bmatrix}}^{x_k} +
		\overbrace{\begin{bmatrix}
				-1 + x^2_{k_2} & 0\\
				0 & 	x_{(k-h)_2}
		\end{bmatrix}}^{A_1(x_k, x_{k-h})}\\ 
	 &\quad\times
		\underbrace{\begin{bmatrix}
				x_{(k-h)_1}\\
				x_{(k-h)_2}
		\end{bmatrix}}_{x_{k-h}}
		+
		\underbrace{\begin{bmatrix}
				x_{k_1}\\
				x_{(k-h)_2}
		\end{bmatrix}}_{G (x_k, x_{k-h})} u_k +
		\underbrace{\begin{bmatrix}
				0.4 & 0\\
				0 & 	0.5
		\end{bmatrix}}_{E}\underbrace{\begin{bmatrix}
				w_{k_1}\\
				w_{k_2}
		\end{bmatrix}}_{w_k}\!\!.
	\end{align*}
	\hfill$\blacksquare$
\end{example}

We now formalize the safety property for the dt-SNPS-td $\Sigma$ introduced in Definition~\ref{def:dt-SNPS-td} through the following definition.

\begin{definition}[\textbf{Safety Property}]\label{safety}
	Given a dt-SNPS-td  $\Sigma={(\mathcal A, \mathcal A_1,  \mathcal{G},\allowbreak E, X, \allowbreak U, h)}$, consider a safety specification $\Upsilon=(X_a, X_b, \mathcal{T})$, where $ X_a, X_b \allowbreak  \subset X$ are its initial and unsafe sets, respectively.
	We assume $X_a \cap  X_b=\emptyset$, otherwise the system is unsafe with a probability of 1. The dt-SNPS-td is said to be safe within time horizon $\mathcal{T} \in \mathbb{N}$, denoted by $\Sigma \models_\mathcal T \Upsilon$, if all trajectories of $\Sigma$ starting from any initial state history $\mathbf{x}_0 \in X_a^{h+1}$ do not reach {$ X_b$ during $\mathcal{T}$ time steps. Equivalently, the state history $(x_k,x_{k-1},\ldots,x_{k-h})$ should not lie in $X_b \times (X\backslash X_b)^h$ for any $k \in \{1,\dots,\mathcal{T}\}$.} Since trajectories of $\Sigma$ are probabilistic, the aim is to formally compute $\PP \{\Sigma \models_\mathcal T \Upsilon\} \geq 1-\mu_{h}$, where $\mu_{h} \in(0,1]$ denotes the safety violation bound.
\end{definition}

For completeness and to provide sufficient intuition to the general reader, we first briefly describe in the next subsection the CBC formulation for stochastic systems without time delays, \emph{i.e.,}

\begin{equation}\label{baseline}
	x_{k+1} \!=\! A(x_k) x_k \!+\!  G (x_k) u_k \!+\! E w_k.
\end{equation}
We then extend this formulation by generalizing the conventional barrier concept to the \emph{Krasovskii} framework in the subsequent sections.

\subsection{Control Barrier Certificates}\label{sec:CBC}
We first present the notion of control barrier certificates, adapted from~\citep{1428804}.

\begin{definition}[{\textbf{CBC}}]\label{def:CBC} 
	Consider the stochastic system in~\eqref{baseline}, with ${X}_a \subset X$ and $ {X}_b \subset X$ being its initial and unsafe sets, respectively. Assuming the existence of constants $\eta, \gamma_a, \gamma_b \in \mathbb{R}^{+}$, with $\gamma_b > \gamma_a$, a function $\mathcal B:  X \to \mathbb{R}_0^+$ is called a control barrier certificate (CBC) for the system in~\eqref{baseline} if
	\begin{subequations}\label{eq:CBC_Base}
		\begin{align}\label{CBC_Base_1}
			&  \:\:  \mathcal B({x}) \leq \gamma_a, \hspace{3cm}  \forall {x} \in {X}_{a},\\\label{CBC_Base_2}
			&  \:\:  \mathcal B({x}) \geq \gamma_b, \hspace{3cm} \forall {x} \in {X}_b,
		\end{align}  
		and $\forall {x} \in {X}, \exists\, u \in {U}$, such that
		\begin{align}\label{CBC_Base_3}
			& \EE\big[ \mathcal B(x_{k+1}) \,\,\,\big|\,\,\, x_k,u_k\big] - \mathcal B({x}_k) \leq  \eta.
		\end{align}
	\end{subequations}
\end{definition}

As the system in~\eqref{baseline} evolves stochastically, building on~\citep{Kushner,1428804} and using Definition~\ref{def:CBC}, the following theorem establishes a lower bound on the probability that the system trajectories in~\eqref{baseline} remain within the safe set over a finite horizon.

\begin{theorem}[\textbf{Probabilistic Safety}]\label{Th1}
	Given the stochastic system in~\eqref{baseline}, assume that there exist a CBC $\mathcal{B}$ and a control policy $u(\cdot)$ satisfying conditions~\eqref{CBC_Base_1}-\eqref{CBC_Base_3}. Then, the probability that the solution process {${x}_{{x}_0uw}(k)$,} starting from any initial state ${x}_0 \in {X}_a$ under $u(\cdot)$  and a process noise $w(\cdot)$, never reaches the unsafe set ${X}_b$ within the time horizon $k \in \{1,\dots,\mathcal{T}\}$ is bounded by
\end{theorem}
\begin{align*}
	&\PP\Big\{{{x}_{{x}_0uw}(k)} \notin X_b \text { for all} ~ {k \in \{1,\dots,\mathcal{T}\}}] \,\,\big|\, {{x}_0} \Big\} \\&~~~~\! \geq \! 1 \! - \! \frac{\gamma_a + \eta \mathcal{T}}{\gamma_b}.
\end{align*}

In this work, to handle the inherent delays in the system, we introduce two types of \emph{Krasovskii} control barrier certificates: the Krasovskii \emph{quadratic} control barrier certificate (K-QCBC) and the Krasovskii \emph{polynomial} control barrier certificate (K-PCBC). Between the two, K-PCBC offers greater flexibility for constructing safety certificates and controllers, as its polynomial structure is not limited to quadratic forms like K-QCBC. However, this generality comes at the cost of increased computational complexity (cf. Table~\ref{tab:system-configurations} in the case study section). In contrast, K-QCBC, as a structured subclass, facilitates certification and controller synthesis through its tractable algebraic form, enabling safety conditions to be expressed as matrix inequalities (cf.~\eqref{K-PCBC}). 

As the first development, we formulate Problem~\ref{Prob1}, which focuses on the safety synthesis of the dt-SNPS-td $\Sigma$ using the K-QCBC framework.

\begin{resp}
\begin{problem}\label{Prob1}
	Given a dt-SNPS-td $\Sigma = {(\mathcal A, \mathcal A_1,  \mathcal{G},\allowbreak E, X, \allowbreak U, h)}$ subject to input constraints~\eqref{input_set}, and a safety specification $\Upsilon=( {X}_a, {X}_b, \mathcal{T})$, synthesize a formal controller based on the K-QCBC framework that guarantees satisfaction of the safety specification $\Upsilon$ over the time horizon $\mathcal{T}$ with a probabilistic bound $\mu_{h} \in(0,1]$, i.e., $\PP \{\Sigma \models_{\mathcal{T}} \Upsilon\} \geq 1-\mu_{h}$, such that the safety guarantee remains robust to the presence of system delays.
\end{problem}
\end{resp}

\section{Krasovskii Quadratic Control Barrier Certificates}\label{sec:Krasovskii}
We denote the sequence of state history at time $k \in \mathbb{N}$ by 
\begin{equation}\label{history}
	\mathbf{x}_k =(x_k,x_{k-1},\ldots,x_{k-h}) \in X^{h+1},   
\end{equation}
and the successor sequence is presented by
\begin{equation}\label{successor}
	\mathbf{x}_{k+1}=(x_{k+1},x_k,\ldots,x_{k-h+1})\in X^{h+1}.
\end{equation}
From \eqref{history}, we define
\begin{equation}\label{history-simple}
	\underbrace{\mathbf{x}_k}_\mathbf{x}=(\underbrace{x_k}_{x},\underbrace{x_{k-1}}_{x_1},\ldots,\underbrace{x_{k-h}}_{x_h}),   
\end{equation}
where $x$ denotes the current state, $x_1$ the one-step delayed state, and $x_h$ the $h$-step delayed state.

To derive sufficient safety conditions for a dt-SNPS-td in~\eqref{eq:dt-SNPS-td}, we consider a Krasovskii quadratic control barrier certificate (K-QCBC) in the form of
\begin{equation}\label{k-CBC}
	\mathcal{B}(\mathbf{x})=x^\top P x + \sum_{i=1}^{h} x^\top_{i} P_1 x_{i}.
\end{equation}
As evident from~\eqref{k-CBC}, the K-QCBC consists of two components: the first corresponds to the conventional quadratic CBC that accounts only for the current state, while the second is a summation over delayed states that captures their influence in the safety synthesis process. This type of barrier certificate in~\eqref{k-CBC} is inspired by the \emph{Lyapunov–Krasovskii} functional approach used in the stability analysis of time-delayed systems~\citep{Papachristodoulou_time_delay}.
{\begin{remark}
While the K-QCBC in \eqref{k-CBC} consists of two independent components, $x_k$ and $x_{k-h}$, it is designed to capture their coupling through the system dynamics. This is evident from \eqref{eq:dt-SNPS-td}, where the polynomial matrices $A(x_k,x_{k-h})$ and $A_1(x_k,x_{k-h})$ depend jointly on both $x_k$ and $x_{k-h}$. We refer to Example~\ref{example} and the academic system dynamics in Appendix~\ref{A1}, where cross-term interactions between $x_k$ and $x_{k-h}$ explicitly appear.
\end{remark}}
{We formally introduce the notion of the K-QCBC in the following definition.}

\begin{definition}[\textbf{K-QCBC}]\label{def:KCBC}
	Consider a dt-SNPS-td
	$\Sigma = {(\mathcal A, \mathcal A_1,  \mathcal{G},\allowbreak E, X, \allowbreak U, h)}$, with $ {X}_a \subset  X$ and $ {X}_b \subset X$ being its initial and unsafe sets, respectively. Assuming the existence of constants $\eta, \gamma_a, \gamma_b \in \mathbb{R}^{+}$, with $\gamma_b > \gamma_a$, a function $\mathcal B: X^{h+1} \to \mathbb{R}_0^+$ is called a Krasovskii quadratic control barrier certificate (K-QCBC) for $\Sigma$ if
	\begin{subequations}\label{eq:CBC}
		\begin{align}
			&  \:\:  \mathcal B(\mathbf{x}) \leq \gamma_a, \hspace{1.5cm}  \forall \mathbf{x} \in {X}^{h+1}_{a},\label{subeq:initial}\\
			&  \:\:  \mathcal B(\mathbf{x}) \geq \gamma_b, \hspace{1.5cm} \forall \mathbf{x} \in X_b \times (X\backslash X_b)^h, \label{subeq:unsafe}
		\end{align}  
		and $\forall \mathbf{x} \in {X}^{h+1}, \exists\, u \in {U}$, such that
		\begin{align}\label{subeq:decreasing}
			& \EE\big[ \mathcal B(\mathbf{x}_{k+1}) \,\,\,\big|\,\,\,  \mathbf{x}_k,u_k\big] - \mathcal B(\mathbf{x}_k) \leq  \eta.
		\end{align}
	\end{subequations}
\end{definition}
\begin{remark}
	Note that preventing the state history from entering the set $X_b^{h+1}$ is insufficient to guarantee safety, as it only prevents trajectories in which \emph{all} $h+1$ states lie in the unsafe region, without ruling out cases where some states may still be unsafe. In contrast, preventing the state history from entering the set $X_b \times (X \setminus X_b)^h$ in~\eqref{subeq:unsafe} ensures that if none of the previous $h$ states lie in the unsafe set, then the current state will also remain safe. Given that the initial state history lies in $X_a^{h+1}$ and $X_a \cap X_b = \emptyset$, this condition guarantees that the system will never reach the unsafe set, thereby ensuring safety.
\end{remark}
Inspired by Theorem~\ref{Th1}, the following proposition builds upon Definition~\ref{def:KCBC} and establishes a lower bound on the probability that the trajectories of the dt-SNPS-td $\Sigma$ remain within the safe set over a finite time horizon.

\begin{proposition}\label{Th:safety}
	Given a dt-SNPS-td $\Sigma=(\mathcal A, \mathcal A_1,  \mathcal{G}, E, \\ X, U, h)$, assume there exists a K-QCBC $\mathcal{B}$ { and a control policy $u(\cdot)$} as in Definition~\ref{def:KCBC}. Then the probability that the solution process {${\mathbf x}_{\mathbf x_0uw}$} of $\Sigma$, starting from any initial history $\mathbf x_0  \in {X}^{h+1}_a$ {under $u(\cdot)$} and a process noise $w(\cdot)$ never reaches $ X_b \times (X\backslash X_b)^h$ within { the time horizon $k \in \{1,\cdots,\mathcal{T}\}$} is quantified as
\end{proposition}
\begin{align*}
	&\PP\Big\{ {\mathbf x}_{{\mathbf  x_0uw}}(k) \!\notin\!  X_b \!\times\! (X\backslash X_b)^h \text { for all } {k\! \in\! \{1,\dots,\!\mathcal{T}\}} \,\big|\, \mathbf{x}_0 \Big\}\\ & ~~~\geq 1- \mu_{h},
\end{align*}
with $\mu_{h} = \frac{\gamma_a + \eta \mathcal{T}}{\gamma_b}$, which is robust to the system's delay $h$.

{\bf Proof.}
	According to the condition~\eqref{subeq:unsafe}, we have
	\begin{align*}
		&X_b \times (X\backslash X_b)^h \subseteq\Big\{ \mathbf{x} \in {X}^{h+1}  \,\,\,\big|\,\,\, \mathcal{B}(\mathbf{x}) \geq \gamma_b\Big\}.\\
	\end{align*}
	Then, by applying Theorem 3 in \citep{Kushner} and employing conditions in  \eqref{subeq:decreasing} and \eqref{subeq:initial}, respectively, one has
	\begin{align*}
		&  \PP\Big\{ {\mathbf x}_{{\mathbf  x_0uw}}(k) \!\in\!  X_b \!\!\times\!\! (X\backslash X_b)^h \text { for some } {k\! \in\! \{1,\dots,\!\mathcal{T}\}} \,\big|\, \mathbf{x}_0 \Big\} \\
		& \quad \leq  \PP \Big\{\sup _{{k \in \{1,\dots,\mathcal{T}\}}} \mathcal{B}\big({\mathbf x}_{{\mathbf  x_0uw}}(k)\big) \geq \gamma_b  \,\big|\, \mathbf{x}_0 \Big\} \leq \frac{\gamma_a + \eta \mathcal{T}}{\gamma_b}  \!.
	\end{align*}
	Taking the complement yields
	\begin{align}\notag
		\PP&\Big\{ {\mathbf x}_{{\mathbf  x_0uw}}(k) \!\notin\!  X_b \!\times\! (X\backslash X_b)^h \text { for all } {k\! \in\! \{1,\dots,\!\mathcal{T}\}} \,\big|\, \mathbf{x}_0 \Big\}\\\label{eq:safety_prob} &~ \ge 1 - \dfrac{\gamma_a + \eta \mathcal{T}}{\gamma_b}.
	\end{align}
	which completes the proof. $\hfill\blacksquare$
\begin{remark}
	For a system with a time-invariant delay $h$, the probability of transitioning from the current state $x_k$ to the next state $x_{k+1}$ depends not only on the current state but also on the preceding $h$ states. This dependence characterizes the system as an order-$h$ Markov process. By augmenting the state space to include the delay history, i.e., representing it as an $(h+1)$-tuple $\mathbf{x}_k =(x_k,x_{k-1},\ldots,x_{k-h}) \in X^{h+1}$ as in Proposition~\ref{Th:safety}, the process can be equivalently viewed as an order-$1$ Markov process under condition~\eqref{subeq:decreasing}. Consequently, the results of \citep[Theorem~3]{Kushner}, used in the final stage of the proof, are applicable to this setting.
\end{remark}
Satisfaction of the proposed conditions outlined in Definition~\ref{def:KCBC} enables the design of a safety controller that provides a probabilistic safety guarantee, as established in Proposition~\ref{Th:safety}. While promising, the condition in~\eqref{subeq:decreasing} presents the main challenge, as it captures the dynamics of the dt-SNPS-td with time-invariant delays. {To address this, we exploit the {quadratic} structure of the Krasovskii control barrier certificates to reformulate~\eqref{subeq:decreasing} into a set of matrix inequalities.}

In the following, the K-QCBC method is proposed for two cases: with and without input constraints of \eqref{input_set}, as described in Subsections \ref{with_input} and \ref{without_input}, respectively. For K-QCBC with input constraints, the problem is {non-convex} since we aim to jointly design K-QCBC and a safety controller while imposing the input constraints specified in \eqref{input_set}. To address this non-convexity, we propose an approach that resolves it in a conservative manner while enabling the joint design of the K-QCBC and a safety controller under these constraints, as demonstrated in the results of Theorem~\ref{Th:decay}. Despite the underlying conservatism, the effectiveness of the K-QCBC with input constraints has been validated on three complex case studies (cf.~Section~\ref{sec:Case}). In Subsection \ref{without_input}, we then present results for K-QCBC without input constraints, leveraging a new condition that eliminates the aforementioned conservatism (cf. Proposition~\ref{new87}). The results in Subsections~\ref{with_input} and~\ref{without_input} highlight a trade-off between enforcing input constraints and the conservativeness of the underlying safety conditions.

\subsection{K-QCBC Design with Input Constraints}\label{with_input}

In this subsection, leveraging the K-QCBC, we reformulate~\eqref{subeq:decreasing} as a set of matrix inequalities that enforce the input constraints in~\eqref{input_set}, while also fulfilling~\eqref{subeq:initial}–\eqref{subeq:unsafe}. The following theorem formally presents these results, which constitute the first main contribution of this work.
\begin{theorem}[\textbf{K-QCBC with Input Constraints}]\label{Th:decay}
	Consider a dt-SNPS-td $\Sigma = \left({\mathcal A, \mathcal A_1,  \mathcal{G},\allowbreak E, X, \allowbreak U, h}\right)\!,$ with the input constraint~\eqref{input_set}. Suppose there exist matrices $P \succ 0$, $P_1 \succeq 0$, symmetric matrix $\mathcal{S}$, state-dependent polynomial matrices $\mathcal{F}$, $\mathcal{F}_1$ and constant $\alpha \in \mathbb{R}^+$ such that following conditions hold:
	\begin{subequations}\label{K-PCBC}
		\begin{align}
			\label{conin}
			&x^\top Px + h \bar{x}^\top P_1\bar{x} \leq \gamma_a, ~~\quad\quad\quad \forall (x,\bar{x}) \in {X_a^2},
			\\\label{conun}
			&x^\top Px \geq \gamma_b,   \quad\quad\quad\quad\quad\quad\quad\quad\,\, \forall x \in {X_b},
			\\\label{con1}
			&\begin{bmatrix}
				P-P_1  &  \mathbf{0}& (\mathcal A  +  \mathcal G \mathcal{F} )^\top  \\
				\star  &   P_1 & (\mathcal A_1 + \mathcal G \mathcal{F}_1 )^\top  \\
				\star   &  \star   & \mathcal{S}
			\end{bmatrix} \succeq 0,   ~~\;\; \forall (x,x_{h}) \in {X}^2,\\\label{con2}
			&P \preceq \frac{1}{\alpha} \mathbf{I}_{n},\\\label{con3}
			&\mathcal{S} \preceq {\alpha} \mathbf{I}_{n},\\\label{con4}
			&1 \!-\! b_j^{\top} (\mathcal{F}x \!+\! \mathcal{F}_1x_h) \geq 0,\;\; \forall (x, x_{h} ) \in {X}^2, \, j=1, \ldots, J.
		\end{align}
	\end{subequations}
	Then, conditions~\eqref{subeq:initial}-\eqref{subeq:decreasing} together with the input constraint~\eqref{input_set} are all enforced {with the control input and CBC as in \eqref{feedback} and \eqref{k-CBC}, respectively}.
\end{theorem}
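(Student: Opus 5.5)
The proof checks each requirement of Definition~\ref{def:KCBC}, plus the input constraint~\eqref{input_set}, for the K-QCBC $\mathcal{B}(\mathbf{x})=x^\top P x+\sum_{i=1}^h x_i^\top P_1 x_i$ in~\eqref{k-CBC} and the controller $u=\mathcal{F}x+\mathcal{F}_1x_h$ in~\eqref{feedback}. We take $\eta:=\mathsf{Tr}(E^\top P E)$.

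\emph{Input constraint.} This part is immediate. Substituting~\eqref{feedback} into~\eqref{con4} gives $b_j^\top u_k\le 1$ for every $j$ and every $(x,x_h)\in X^2$, so $u_k\in U$.

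\emph{Initial condition~\eqref{subeq:initial}.} Let $\mathbf{x}\in X_a^{h+1}$ and let $i^\star$ be an index maximizing $x_i^\top P_1 x_i$ over $i\in\{1,\dots,h\}$. Then $\sum_{i=1}^h x_i^\top P_1x_i\le h\,x_{i^\star}^\top P_1 x_{i^\star}$. Applying~\eqref{conin} with $\bar x=x_{i^\star}\in X_a$ yields $\mathcal{B}(\mathbf{x})\le\gamma_a$.

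\emph{Unsafe condition~\eqref{subeq:unsafe}.} Since $P_1\succeq 0$, the delayed summation is nonnegative. Hence $\mathcal{B}(\mathbf{x})\ge x^\top P x\ge\gamma_b$ whenever $x\in X_b$ by~\eqref{conun}. In particular this holds on $X_b\times(X\backslash X_b)^h$. Nonnegativity of $\mathcal{B}$ also follows from $P\succ0$ and $P_1\succeq0$.

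\emph{Decrease condition~\eqref{subeq:decreasing}.} Write $M=\mathcal A+\mathcal G\mathcal F$, $M_1=\mathcal A_1+\mathcal G\mathcal F_1$, and $y=Mx+M_1x_h$, so that $x_{k+1}=y+Ew_k$. Using $w_k\sim\mathcal N(\mathbf 0,\mathbf I_n)$, the independence of $w_k$ from $\mathbf{x}_k$, and the shift structure~\eqref{successor}, the sum over delayed states telescopes. All terms cancel except the newly entering $x$ and the outgoing $x_h$:
\[
\EE[\mathcal{B}(\mathbf{x}_{k+1})\mid\mathbf{x}_k,u_k]-\mathcal{B}(\mathbf{x}_k)=y^\top P y+\mathsf{Tr}(E^\top PE)-x^\top(P-P_1)x-x_h^\top P_1x_h .
\]
It therefore suffices to show $y^\top Py\le x^\top(P-P_1)x+x_h^\top P_1x_h$ for all $(x,x_h)\in X^2$.

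Condition~\eqref{con1} has the form $\begin{bmatrix} D & K^\top\\ K & \mathcal S\end{bmatrix}\succeq0$ with $D=\operatorname{diag}(P-P_1,P_1)$ and $K=[M\;\; M_1]$. By the generalized Schur complement, this gives $\mathcal S\succeq0$, $\operatorname{range}(K)\subseteq\operatorname{range}(\mathcal S)$, and $D\succeq K^\top\mathcal S^{\dagger}K$. Now $y=K[x;x_h]$ lies in $\operatorname{range}(\mathcal S)$. On that subspace the eigenvalues of $\mathcal S$ lie in $(0,\alpha]$ by~\eqref{con3}, so $y^\top\mathcal S^\dagger y\ge \frac1\alpha\|y\|^2\ge y^\top Py$ by~\eqref{con2}. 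Chaining these inequalities gives $y^\top Py\le [x;x_h]^\top D[x;x_h]$, which is exactly the required bound. Hence~\eqref{subeq:decreasing} holds with $\eta=\mathsf{Tr}(E^\top PE)>0$.

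\emph{Expected main obstacle.} The delicate step is this Schur-complement argument. The matrix $\mathcal S$ is only assumed symmetric, not invertible, so one must either use the pseudo-inverse and the range inclusion as above, or perturb $\mathcal S$ to $\mathcal S+\epsilon\mathbf I_n$ and let $\epsilon\to0$. The role of $\alpha$ is the second point to get right: it links $P$ and $\mathcal S$ through $P\preceq\frac1\alpha\mathbf I_n\preceq\mathcal S^{-1}$, which is where the bilinearity (and the conservatism of the result) is introduced.
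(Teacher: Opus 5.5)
Your proof is correct. The parts covering \eqref{subeq:initial}, \eqref{subeq:unsafe}, the input constraint and the telescoped expected increment match the paper. Your pointwise choice of a maximizing index $i^\star$ is a concrete version of the paper's separable-supremum identity.

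The genuine difference is how you get from \eqref{con1} to $D-K^\top P K\succeq 0$. The paper never takes a Schur complement with respect to $\mathcal S$. It writes the target matrix $\mathbf R$, which has $P^{-1}$ in the $(3,3)$ block, as $\mathbf R_1+\mathbf R_2$. Here $\mathbf R_1$ is the matrix in \eqref{con1}, and $\mathbf R_2=\operatorname{diag}(\mathbf 0,\mathbf 0,P^{-1}-\mathcal S)\succeq 0$ because $P^{-1}\succeq\alpha\mathbf I_n\succeq\mathcal S$. This gives $\mathbf R\succeq 0$, and the ordinary Schur complement with respect to the invertible block $P^{-1}$ then yields \eqref{step1}. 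You instead apply the generalized (pseudo-inverse) Schur complement directly to \eqref{con1}, then compare $P$ with $\mathcal S^\dagger$ on $\operatorname{range}(\mathcal S)$.

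Both chains are valid and encode the same bridge $P\preceq\frac{1}{\alpha}\mathbf I_n\preceq$ ``$\mathcal S^{-1}$''. The paper's ordering, which enlarges the $(3,3)$ block to $P^{-1}$ before taking the Schur complement, makes the possible singularity of $\mathcal S$ irrelevant. The step you flag as the main obstacle, with its range inclusion, spectral argument on $\operatorname{range}(\mathcal S)$ or $\epsilon$-perturbation, therefore disappears. What your version buys is that it makes explicit, vector by vector, how $\alpha$ links $P$ and $\mathcal S$ through $y^\top P y\le\frac{1}{\alpha}\|y\|^2\le y^\top\mathcal S^\dagger y$. It also shows that \eqref{con1} alone already forces $\mathcal S\succeq 0$.
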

{\bf Proof.}
	We begin by showing that the satisfaction of \eqref{conin} and \eqref{conun} ensures the satisfaction of the conditions in \eqref{subeq:initial} and \eqref{subeq:unsafe}. Given that $\mathcal{B}(\mathbf{x})=x^\top P x + \sum_{i=1}^{h} x^\top_{i} P_1 x_{i}$, one has
	\begin{align*}
		\sup_{\mathbf{x} \in {X}_a^{h+1}} \mathcal{B}(\mathbf{x}) 
		&= \sup_{x \in {X}_a} x^\top P x +  \sum_{i=1}^{h} \sup_{x_i \in {X}_a} x_i^\top P_1 x_i \\
		&~= \sup_{x \in {X}_a} x^\top P x + h \sup_{\bar{x} \in {X}_a} \bar{x}^\top P_1 \bar{x} \\
		&~= \sup_{(x,\bar{x}) \in {X}_a^2} x^\top P x + h  \bar{x}^\top P_1 \bar{x}.
	\end{align*}
	Hence, satisfying condition \eqref{conin} guarantees that \eqref{subeq:initial} is met. Additionally, one has
	\begin{align*}
		\inf_{\mathbf{x} \in {X}_b \times ({X} \backslash {X}_b)^h} \!\! \mathcal{B}(\mathbf{x}) 
		&=\! \inf_{x \in {X}_b} x^\top P x +\! \sum_{i=1}^{h}  \inf_{x_i \in ({X} \backslash {X}_b)} \!\! x_i^\top P_1 x_i \\
		&\stackrel{P_1 \succeq 0}{\geq} \inf_{x \in {X}_b} x^\top P x.
	\end{align*}
	Thus, satisfying condition \eqref{conun} ensures that \eqref{subeq:unsafe} is also fulfilled.
	
	\noindent We proceed by showing the satisfaction of~\eqref{subeq:decreasing}, which represents the most challenging condition. Considering
	$\mathbf{x}_{k+1}$ in~\eqref{successor}, we have
	\begin{align}\notag
		&\EE\big[ \mathcal B(\mathbf{x}_{k+1}) \,\,\,\big|\,\,\, \mathbf{x}_k,u_k\big]
		\\\notag&~~~~=\EE\big[x^{\top}_{k+1} P x_{k+1} + \sum_{i=1}^h x^{\top}_{k-i+1} P_1 x_{k-i+1}  \,\,\,\big|\,\,\, \mathbf{x}_k,u_k\big] 
		\\\notag&~~~~= \EE\big[x^{\top}_{k+1} P x_{k+1}\,\,\,\big|\,\,\, \mathbf{x}_k,u_k\big] \\\label{telescopic} &\qquad+ \EE\big[\sum_{i=1}^h x^{\top}_{k-i+1} P_1 x_{k-i+1} \,\,\,\big|\,\,\, \mathbf{x}_k,u_k\big].
	\end{align}
	According to~\eqref{history}, and given the conditional expectation with respect to $\mathbf{x}_k$ and $u_k$, the expectation operator can be omitted since each state is conditioned on itself:
	\begin{equation}\label{telescopic2}
		\EE\big[\sum_{i=1}^h \! x^{\top}_{k-i+1} P_1 x_{k-i+1} \,\,\,\big|\,\,\, \mathbf{x}_k,u_k\big] \!=\! \sum_{i=1}^h \! x^{\top}_{k-i+1} P_1 x_{k-i+1}.
	\end{equation}
	From \eqref{telescopic2}, we obtain the equivalent form of \eqref{telescopic} as
	\begin{align}\notag
		&\EE\big[ \mathcal B(\mathbf{x}_{k+1}) \,\,\,\big|\,\,\, \mathbf{x}_k,u_k\big]\\\notag
		&~~~~=\EE\big[x^{\top}_{k+1} P x_{k+1}  \,\,\,\big|\,\,\, \mathbf{x}_k,u_k\big] 
		+ \sum_{i=1}^h x^{\top}_{k-i+1} P_1 x_{k-i+1} \\\notag
		&~~~~=\EE\big[x^{\top}_{k+1} P x_{k+1}  \,\,\,\big|\,\,\, \mathbf{x}_k,u_k \big] 
		+x^{\top}_k P_1 x_k\\\label{telescopic3} & \qquad
		+\sum_{i=2}^h x^{\top}_{k-i+1} P_1 x_{k-i+1}.
	\end{align}
	Given $\mathbf{x}_k$ in \eqref{history}, by subtracting $\mathcal{B}(\mathbf{x}_k)$ from both sides of~\eqref{telescopic3} and re-indexing the summation over delayed states, we have
	\begin{align*}
		&\EE\big[ \mathcal B(\mathbf{x}_{k+1}) \,\,\big|\,\, \mathbf{x}_k,u_k\big] - \mathcal B(\mathbf{x}_k)  \\ 
		&~~=\EE\big[x^{\top}_{k+1} P x_{k+1}  \,\,\big|\,\, \mathbf{x}_k,u_k \big] 
		\!+ \! x^{\top}_k P_1 x_k
		\!+\! \underbrace{\sum_{i=1}^{h-1} \! x^{\top}_{k-i} P_1 x_{k-i}}_{\text{re-indexed}}  \\[-5ex] &~~-\big(\overbrace{x^{\top}_k P x_k
			+\sum_{i=1}^{h} \! x^{\top}_{k-i} P_1 x_{k-i}\big)}^{\mathcal B(\mathbf{x}_k)}.
	\end{align*}
	By cancellation of the overlapping terms in the sums over \( P_1 \), we simplify the series and obtain
	\begin{align}\notag
		&\EE\big[ \mathcal B(\mathbf{x}_{k+1})  \,\,\,\big|\,\,\, \mathbf{x}_k,u_k\big] - \mathcal B(\mathbf{x}_k)  \\\notag  
		&~~~~= \EE\big[x^{\top}_{k+1} P x_{k+1}  \,\,\,\big|\,\,\, \mathbf{x}_k,u_k\big] 
		+ x^{\top}_{k} P_1 x_{k} - x^{\top}_{k} P x_{k} \\\label{telescopic4} & \qquad - x^{\top}_{k-h} P_1 x_{k-h}.
	\end{align}
	From \eqref{history-simple}, we set \( x_{k-h} = x_h \) as the state delayed by \( h \) steps and \( x_k = x \) as the current state. By applying the conditional expectation $\EE\big[x^{\top}_{k+1} P x_{k+1}  \,\,\,\big|\,\,\, \mathbf{x}_k=\mathbf{x}, u_k=u\big]$ in~\eqref{telescopic4} to the dt-SNPS-td $\Sigma$ in~\eqref{eq:dt-SNPS-td} with the state-feedback controller $u=\mathcal{F}x + \mathcal{F}_1x_h$ in~\eqref{feedback}, and since {$w\sim\mathcal{N}(\mathbf{0},\mathbf{I}_n)$}, we obtain
	\begin{align*}
		&\EE\big[ \mathcal B(\mathbf{x}_{k+1})  \,\,\,\big|\,\,\, \mathbf{x}_k=\mathbf{x},u_k=u\big] - \mathcal B(\mathbf{x}_k)\\ &=
		\EE \big[( \mathcal Ax  + \mathcal A_1 x_h  +   \mathcal G u+ Ew )^\top P ( \mathcal Ax  + \mathcal A_1 x_h  +   \mathcal G u\\ & ~~~~ + Ew ) \,\,\,\big|\,\,\, \mathbf{x},u\big]- x^\top P x +x^\top P_1 x - x_{h}^\top P_1 x_{h}\\ &=
		\EE \big[x^\top( \mathcal A  +  \mathcal G \mathcal{F} )^\top P ( \mathcal A   +  \mathcal G \mathcal{F} )  x\,\,\,\big|\,\,\, \mathbf{x},u\big] \\ &~~~~+  \EE\big[x_{h}^\top  (\mathcal A_1 + \mathcal G \mathcal{F}_1 )^\top  P (\mathcal A_1 + \mathcal G \mathcal{F}_1 )  x_{h}\,\,\,\big|\,\,\, \mathbf{x},u\big] \\ &~~~~ + 2\EE\big[x_{h}^\top(\mathcal A_1 + \mathcal G \mathcal{F}_1 )^\top P ( \mathcal A   +  \mathcal G \mathcal{F} )  x \,\,\,\big|\,\,\, \mathbf{x},u\big]\\
		&~~~~+ \EE\big[w^\top E^\top P E w\,\big|\, \mathbf{x},u\big] - x^\top P x +x^\top P_1 x - x_{h}^\top P_1 x_{h}.  
	\end{align*}
	By applying the expected values, collecting terms in  $x$ and  $x_{h}$, and using the cyclic property of the trace, $\EE\big[w^\top E^\top P E w\big] = \mathsf{Tr}(\EE\big[w^\top E^\top \allowbreak P E w\big]) = \mathsf{Tr}(E^\top P E\,\EE\big[ww^\top \big]) $, one has 
	\begin{align*}
		&\EE\big[ \mathcal B(\mathbf{x}_{k+1})  \,\,\,\big|\,\,\, \mathbf{x}_k=\mathbf{x},u_k=u\big] - \mathcal B(\mathbf{x}_k) \\ &=
		x^\top  \Big(( \mathcal A  +  \mathcal G \mathcal{F} )^\top P ( \mathcal A   +  \mathcal G \mathcal{F} ) -P +P_{1}  \Big) x\\ &\,\,\,\,\,\, +  x_{h}^\top  \Big((\mathcal A_1 + \mathcal G \mathcal{F}_1 )^\top  P (\mathcal A_1 + \mathcal G \mathcal{F}_1 ) - P_1 \Big)  x_{h}\\&\,\,\,\,\,\, + 2 x_{h}^\top \Big( (\mathcal A_1 + \mathcal G \mathcal{F}_1 )^\top P ( \mathcal A   +  \mathcal G \mathcal{F} )\Big)  x\\ &\,\,\,\,\,\, + \mathsf{Tr}(E^\top P E\underbrace{\EE[ww^\top]}_{{\mathbf{I}_n}}).
	\end{align*}
	Then, to satisfy \eqref{subeq:decreasing}, it is sufficient to show that
	\begin{align}\label{step1}
		\begin{bmatrix}
			x \\
			x_{h}
		\end{bmatrix}^\top \begin{bmatrix}
			\Lambda_{11} & \Lambda_{12}  \\
			\star &  \Lambda_{22}
		\end{bmatrix} \begin{bmatrix}
			x \\
			x_{h}
		\end{bmatrix} \leq 0,
	\end{align}
	with {$\eta = \mathsf{Tr}(E^\top P E)$}, and
	\begin{align*}
		\Lambda_{11} &= (\mathcal A  +  \mathcal G \mathcal{F} )^\top P ( \mathcal A   +  \mathcal G \mathcal{F} ) -P +P_{1},\\
		\Lambda_{12} &=  ( \mathcal A   +  \mathcal G \mathcal{F} )^\top P (\mathcal A_1 + \mathcal G \mathcal{F}_1 ), \\
		\Lambda_{22} &=  (\mathcal A_1 + \mathcal G \mathcal{F}_1 )^\top  P (\mathcal A_1 + \mathcal G \mathcal{F}_1 )  - P_1.
	\end{align*}
	According to Schur complement \citep{zhang2006schur}, condition in~\eqref{step1} is satisfied if
	\begin{align}\label{step3}
	\mathbf{R} \!=\!	\begin{bmatrix}
		P -P_1 & \mathbf{0}& (\mathcal A  +  \mathcal G \mathcal{F} )^\top  \\
		\star &  P_1 & (\mathcal A_1 + \mathcal G \mathcal{F}_1 )^\top \\
		\star &  \star  &  {P}^{-1}
	\end{bmatrix} \succeq 0.
\end{align}
{The presence of both \(P\) and \(P^{-1}\) as decision variables in \eqref{step3} leads to a non-convex formulation. To resolve this, \eqref{step3} can be reformulated as}
\begin{align}\label{step4}
	\mathbf{R} \!=\!	\underbrace{\begin{bmatrix}
			P-P_1 & \mathbf{0}& (\mathcal A  +  \mathcal G \mathcal{F} )^\top  \\
			\star & P_1 & (\mathcal A_1 + \mathcal G \mathcal{F}_1 )^\top  \\
			\star  & \star  & \mathcal{S}
	\end{bmatrix}}_{\mathbf{R}_1} + \underbrace{\begin{bmatrix}
			\mathbf{0} & \mathbf{0}& \mathbf{0}  \\
			\mathbf{0}   & \mathbf{0}  & \mathbf{0}  \\
			\mathbf{0}   & \mathbf{0}   & \underbrace{{P}^{-1} - \mathcal{S}}_{\mathbf{H}}
	\end{bmatrix}}_{\mathbf{R}_2} \succeq 0,
\end{align}
	for any $n \times n$ symmetric matrix $\mathcal{S}$. Since $P\succ0$ and $P=P^\top$, conditions \eqref{con2}--\eqref{con3} yield $P^{-1}\succeq \alpha \mathbf{I}_n \succeq \mathcal{S}$, and accordingly $\mathbf{H}\succeq 0$ and $\mathbf{R}_2\succeq 0$. Moreover, \eqref{con1} implies $\mathbf{R}_1\succeq 0$, so $\mathbf{R}\succeq 0$. Hence, it follows that \eqref{subeq:decreasing} holds under the input constraint \eqref{input_set}, which is equivalent to~ condition~\eqref{con4}, thereby completing the proof. $\hfill\blacksquare$
\begin{remark}
It is worth highlighting that the safety conditions \eqref{subeq:initial}-\eqref{subeq:decreasing} in Definition~\ref{def:KCBC} should be satisfied over the $(h+1)$-fold Cartesian product of the relevant sets, the state set, initial set, and unsafe set, accounting for the $h$-step delay. In contrast, the safety conditions proposed using the K-QCBC in Theorem~\ref{Th:decay} (and also the K-PCBC in Lemma~\ref{sos}) involve only a 2-fold Cartesian product of these sets. This refinement significantly reduces the computational complexity of safety verification for time-delayed systems, representing a key novelty of the proposed framework.
\end{remark}

\begin{remark}
We note that $\alpha$ in \eqref{con2},\eqref{con3} is introduced to avoid the simultaneous appearance of $P$ and $P^{-1}$ in \eqref{step4} when enforcing input constraints, which would otherwise result in a non-convex formulation. Specifically, $\alpha$ and $\alpha^{-1}$ appear in \eqref{con2} and \eqref{con3} are fixed a priori to eliminate such non-convexity (cf. \eqref{con1}). This may require checking or adjusting $\alpha$ if the conditions are not satisfied, representing a trade-off between tractability and conservatism. In the absence of input constraints, $\alpha$ does not appear (cf. \eqref{con1-R}).
\end{remark}

\subsection{Computation of K-QCBC and Safety Controller}\label{sec:Computation}
To compute the K-QCBC and its corresponding safety controller proposed in Theorem~
\ref{Th:decay}, we present the following lemma, which transforms conditions \eqref{conin}-\eqref{con1} and \eqref{con4} into an SOS optimization program.

\begin{lemma}[\textbf{K-QCBC: SOS Program}]\label{SOS_K-QCBC} 
	Consider the state set $X$, the initial set $X_a$, and the unsafe set $X_b$, which are outlined by vectors of polynomial inequalities as $ {X}=\left\{ x  \in \mathbb{R}^n  \,\,\,\big|\,\,\, \mathcal{J}(x) \geq 0\right\}$, $X_{a}=\left\{x \in \mathbb{R}^n  \,\,\,\big|\,\,\, \mathcal{J}_{a}(x) \geq\right.$ $0\}$, and $ X_{b}=\left\{x \in \mathbb{R}^n  \,\,\,\big|\,\,\, \mathcal{J}_{b}(x) \geq 0\right\}$. Then, $\mathcal B(\mathbf x) =x^\top P x + \sum_{i=1}^{h} x^\top_{i} P_1 x_{i}$ is a K-QCBC for dt-SNPS-td in~\eqref{eq:dt-SNPS-td}, and $u=\mathcal{F}x + \mathcal{F}_1 x_h$ in~\eqref{feedback} is its safety controller if there exist matrices $P \succ 0$, $P_1 \succeq 0$, symmetric matrix $\mathcal{S}$, constants $ \alpha, \gamma_a, \gamma_b  \in \mathbb{R}^{+}$, with $\gamma_b>\gamma_a$, and vectors of SOS polynomials, $\hat{\mathcal{Y}}_j(x,x_h)$, $\hat{\mathcal{Y}}_{h_j}(x,x_h)$, $j\in\{1,\dots,J\}$, $
	\mathcal{Y}(x,x_h),\mathcal{Y}_{h}(x,x_{h})$, $\mathcal{Y}_{a}(x), \bar{\mathcal{Y}_{a}}(\bar{x})$, and $
	\mathcal{Y}_{b}(x)$, such that the following expressions are all SOS polynomials:
	\begin{subequations}\label{LEMMA1}
	\begin{align}\label{L1} 
		-&x^\top P x - h \bar{x}^\top P_1 \bar{x} - \mathcal{Y}_{a}^{\top}(x) \mathcal{J}_{a}(x) - \bar{\mathcal{Y}_{a}}^{\top}(\bar{x}) \mathcal{J}_{a}(\bar{x}) +\gamma_a, \\\label{L2}
		&x^\top P x  -\mathcal{Y}_{b}^{\top}(x) \mathcal{J}_{b}(x) - \gamma_b, \\\notag 
		&\begin{bmatrix}
			P-P_1 & \mathbf{0}& (\mathcal A  +  \mathcal G \mathcal{F} )^\top  \\
			\star&  P_1 & (\mathcal A_1 + \mathcal G \mathcal{F}_1 )^\top  \\
			\star&  \star & \mathcal{S}
		\end{bmatrix}\\\label{L3} &~~~ {- \big( \mathcal{Y}(x,x_h )^\top\mathcal{J}(x) +\ \mathcal{Y}_h(x,x_{h} )^\top  \mathcal{J}(x_{h}) \big)\mathbf{I}_{3n}},\\\notag
		& 1 \!-\! b_j^{\top}(\mathcal{F}x \!+\! \mathcal{F}_1x_h) \!-\! {\hat{\mathcal{Y}_j}(x,x_h )^\top\!\!\mathcal{J}(x)\!-\!\hat{\mathcal{Y}}_{h_j}(x,x_{h} )^\top \!\! \mathcal{J}(x_{h})},\\\label{L4} &j=1, \ldots,J\!.
	\end{align}
\end{subequations}
\end{lemma}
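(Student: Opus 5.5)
The plan is to reduce the lemma to Theorem~\ref{Th:decay} by a standard S-procedure argument. It suffices to show that each SOS condition in \eqref{LEMMA1} implies the corresponding set-based condition among \eqref{conin}, \eqref{conun}, \eqref{con1}, and \eqref{con4}. Conditions \eqref{con2}--\eqref{con3} are not polynomial in the state; they are constant LMIs in $P$, $\mathcal{S}$, and $\alpha$ and are imposed directly. Once all of \eqref{K-PCBC} hold, Theorem~\ref{Th:decay} yields \eqref{subeq:initial}--\eqref{subeq:decreasing} together with the input constraint \eqref{input_set}. Proposition~\ref{Th:safety} then gives the probabilistic guarantee. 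The common mechanism throughout is that an SOS polynomial is nonnegative everywhere, and that the multiplier terms $\mathcal{Y}^\top\mathcal{J}$ are nonnegative on the relevant semialgebraic sets.

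For \eqref{L1}, fix $(x,\bar{x})\in X_a^2$, so that $\mathcal{J}_a(x)\ge 0$ and $\mathcal{J}_a(\bar{x})\ge 0$ componentwise. Because $\mathcal{Y}_a$ and $\bar{\mathcal{Y}}_a$ are vectors of SOS polynomials, the products $\mathcal{Y}_a^\top(x)\mathcal{J}_a(x)$ and $\bar{\mathcal{Y}}_a^\top(\bar{x})\mathcal{J}_a(\bar{x})$ are nonnegative. Nonnegativity of \eqref{L1} then gives $x^\top Px+h\bar{x}^\top P_1\bar{x}\le \gamma_a - (\text{nonneg.})\le\gamma_a$, which is \eqref{conin}. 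Similarly, for $x\in X_b$, \eqref{L2} yields $x^\top Px\ge \gamma_b+\mathcal{Y}_b^\top\mathcal{J}_b\ge\gamma_b$, which is \eqref{conun}. For \eqref{L4}, take $(x,x_h)\in X^2$. The multipliers $\hat{\mathcal{Y}}_j$ and $\hat{\mathcal{Y}}_{h_j}$ contribute nonnegative terms, so $1-b_j^\top(\mathcal{F}x+\mathcal{F}_1x_h)\ge 0$ for every $j$, which is \eqref{con4}.

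The main step is the matrix condition \eqref{L3}, where SOS must be read in the matrix sense: $M(x,x_h)$ is an SOS matrix if $v^\top M(x,x_h) v$ is an SOS polynomial in $(x,x_h,v)$ for an auxiliary vector $v\in\mathbb{R}^{3n}$. Under this reading, nonnegativity for all $v$ gives $M(x,x_h)\succeq 0$ pointwise. Now fix $(x,x_h)\in X^2$. The scalar $\sigma(x,x_h)=\mathcal{Y}(x,x_h)^\top\mathcal{J}(x)+\mathcal{Y}_h(x,x_h)^\top\mathcal{J}(x_h)$ is nonnegative. Writing $\mathbf{R}_1$ for the block matrix in \eqref{con1}, this gives $\mathbf{R}_1 \succeq \sigma\,\mathbf{I}_{3n}\succeq 0$, which is exactly \eqref{con1}. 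The care needed here is twofold. First, the notion of matrix SOS must be stated explicitly. Second, the entries of $\mathbf{R}_1$ must be polynomial in $(x,x_h)$; this holds because $\mathcal{A}$, $\mathcal{A}_1$, $\mathcal{G}$, $\mathcal{F}$, and $\mathcal{F}_1$ are polynomial matrices and $P$, $P_1$, $\mathcal{S}$ are constant. A remaining subtlety is that \eqref{L3} contains the bilinear products $\mathcal{G}\mathcal{F}$ and $\mathcal{G}\mathcal{F}_1$. These are linear in the decision variables $\mathcal{F}$ and $\mathcal{F}_1$, since $\mathcal{G}$ is given, so the program remains a valid SOS feasibility problem once $\alpha$ is fixed.

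Collecting these implications, all hypotheses of Theorem~\ref{Th:decay} hold with the stated $P$, $P_1$, $\mathcal{S}$, $\mathcal{F}$, $\mathcal{F}_1$, and $\alpha$, with \eqref{con2}--\eqref{con3} appended as LMI constraints. Hence $\mathcal{B}$ in \eqref{k-CBC} is a K-QCBC with $\eta=\mathsf{Tr}(E^\top PE)$, and $u=\mathcal{F}x+\mathcal{F}_1x_h$ is an admissible safety controller.
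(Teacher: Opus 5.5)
Your proposal is correct and follows the same route as the paper. Each SOS condition \eqref{L1}--\eqref{L4} implies, through nonnegativity of the multiplier terms on $X_a^2$, $X_b$, and $X^2$, the corresponding condition \eqref{conin}, \eqref{conun}, \eqref{con1}, \eqref{con4}, and Theorem~\ref{Th:decay} then yields the K-QCBC conditions. Two of your points are useful clarifications rather than departures: your explicit matrix-SOS reading of \eqref{L3}, and your note that \eqref{con2}--\eqref{con3} must be imposed alongside the SOS program, since the lemma statement omits them while the paper's proof and Algorithm~\ref{Alg1} rely on them tacitly.
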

{\bf Proof.}
	Since $\mathcal{Y}_{a}(x),\bar{\mathcal{Y}}_{a}(\bar{x})$ are SOS polynomials, it follows that $\mathcal{Y}_{a}^{\top}(x) \mathcal{J}_{a}(x) + \bar{\mathcal{Y}_{a}}^{\top}(\bar{x}) \mathcal{J}_{a}(\bar{x}) \ge 0$ for all $(x,\bar{x}) \in X^{2}_a$. Therefore, feasibility of \eqref{L1} ensures \eqref{conin}. By similar reasoning, \eqref{L2} implies \eqref{conun}. We now show that \eqref{L3} and \eqref{L4} imply \eqref{con1} and \eqref{con4}. As $\mathcal{Y}(x,x_h)$, $\mathcal{Y}_h(x,x_{h})$, {$\hat{\mathcal{Y}}_j(x,x_h)$, $\hat{\mathcal{Y}}_{h_j}(x,x_{h})$, $j\in\{1,\dots,J\}$} are SOS polynomials, {we have $\mathcal{Y}^\top(x,x_h)\mathcal{J}(x) +\mathcal{Y}_h^\top(x,x_h)\mathcal{J}(x_h)\ge 0$}, and {$\hat{\mathcal{Y}_j}(x,x_h )^\top\!\!\mathcal{J}(x) +\hat{\mathcal{Y}}_{h_j}(x,x_{h} )^\top \!\! \mathcal{J}(x_{h})\geq0, j\in\{1,\dots,J\}$} for all $(x,x_h) \in {X}^2$. Given that \eqref{L3} and \eqref{L4} are also SOS polynomials, then
		\begin{align*}
		&\begin{bmatrix}
			P-P_1 & \mathbf{0}& (\mathcal A  +  \mathcal G \mathcal{F} )^\top  \\
			\star &  P_1 & (\mathcal A_1 + \mathcal G \mathcal{F}_1 )^\top  \\
			\star &  \star & \mathcal{S}
		\end{bmatrix}\\ &~~~
		- \Big(\mathcal{Y}^\top(x,x_h)\,\mathcal{J}(x) + \mathcal{Y}_h^\top(x,x_{h})\,\mathcal{J}(x_{h})\Big)\mathbf{I}_{3n} \succeq 0,\\[2mm]
		& 1 \!-\! b_j^{\top}(\mathcal{F}x + \mathcal{F}_1x_h) - \hat{\mathcal{Y}}_j(x,x_h )^\top\!\!\mathcal{J}(x)\\ &~ - \hat{\mathcal{Y}}_{h_j}(x,x_{h} )^\top \!\! \mathcal{J}(x_{h}) \geq 0,
	\end{align*}
	$j=1, \ldots, J$, therefore \eqref{con1} and \eqref{con4} hold, which in turn yields \eqref{subeq:decreasing}. $\hfill\blacksquare$

\subsection{K-QCBC Design without Input Constraints}\label{without_input}
While the proposed conditions in~\eqref{K-PCBC} offer the advantage of incorporating input constraints, they introduce a degree of conservatism, as the aggregate condition $\mathbf{R}_1 + \mathbf{R}_2 = \mathbf{R} \succeq 0$ in~\eqref{step4} is replaced by the separate constraints $\mathbf{R}_1 \succeq 0$ and $\mathbf{R}_2 \succeq 0$ in~\eqref{con1}–\eqref{con3}. Moreover, these conditions may exhibit sensitivity to the parameter $\alpha$, primarily because of how the input constraint~\eqref{input_set} is incorporated into the formulation. We next present a proposition that provides a relaxed condition applicable to the setting without input constraints, \emph{i.e.,} $u_k \in {U} \subseteq \mathbb{R}^{m}$.

\begin{proposition}[{\textbf{K-QCBC without Input~\eqref{input_set}}}]\label{new87}
	{Consider a dt-SNPS-td $\Sigma = {(\mathcal A, \mathcal A_1,  \mathcal{G},\allowbreak E, X, \allowbreak U, h)},$ with $U \in \mathbb{R}^{m}$.} Suppose there exist matrices $P \succ 0$, $\tilde{P}_1 \succeq 0$, and state-dependent polynomial matrices $\mathcal{Z}(x)$, $\mathcal{Z}_1(x_h)$ such that following condition holds:
\end{proposition}
\begin{align}\label{con1-R}
	&\begin{bmatrix}
		P^{-1} - \tilde{P}_1 &\mathbf{0}& P^{-1} \mathcal A^\top  +  \mathcal Z^\top  \mathcal G^\top  \\
		\star &  \tilde{P}_1 & P^{-1} \mathcal A^\top_1 + \mathcal Z^\top_1  \mathcal G^\top  \\
		\star & \star &  P^{-1}
	\end{bmatrix} \succeq 0, \;\; \forall (x, x_{h} ) \in X^2.
\end{align}
Then, the condition in~\eqref{subeq:decreasing} is enforced.

{\bf Proof.}
	According to \eqref{telescopic4}, in the first stage of the proof of Theorem~\ref{Th:decay} one has
	\begin{align*}
		&\EE\big[ \mathcal B(\mathbf{x}_{k+1})  \,\,\,\big|\,\,\, \mathbf{x}_k,u_k\big] - \mathcal B(\mathbf{x}_k)  \\  
		&~~~~= \EE\big[x^{\top}_{k+1} P x_{k+1}  \,\,\,\big|\,\,\, \mathbf{x}_k,u_k\big] 
		+ x^{\top}_{k} P_1 x_{k} - x^{\top}_{k} P x_{k} \\ &\qquad - x^{\top}_{k-h} P_1 x_{k-h}.
	\end{align*}
	From \eqref{history-simple}, we set \( x_{k-h} = x_h \) as the state delayed by \( h \) steps and \( x_k = x \) as the current state. Let us rewrite $\mathcal{B}(\mathbf{x})$ as $\mathcal{B}(\mathbf{x})=x^\top P^{\top} P^{-1} P x + \sum_{i=1}^{h} x_i^\top P\tilde{P}_1 P x_i$ with $P=P^{\top}$,  $P^{-1}P=\mathbf{I}_n$, and ${P}_1 = P\tilde{P}_1P$.
	By applying the conditional expectation to the dt-SNPS-td $\Sigma$ with the state-feedback controller $u=\mathcal{F}x + \mathcal{F}_1x_h$, and since {$w\sim\mathcal{N}(\mathbf{0},\mathbf{I}_n)$}, we obtain
	\begin{align*}
		&\EE\big[ \mathcal B(\mathbf{x}_{k+1})  \,\,\,\big|\,\,\, \mathbf{x}_k=\mathbf{x},u_k=u\big] - \mathcal B(\mathbf{x}_k)  \\ &=
		\EE \big[( \mathcal Ax  + \mathcal A_1 x_h  +   \mathcal G u+ Ew )^\top P ( \mathcal Ax  + \mathcal A_1 x_h  +   \mathcal G u\\ & ~~~ + Ew ) \,\,\,\big|\,\,\, \mathbf{x},u\big] {- x^\top P x +x^\top P_1 x - x_{h}^\top P_1 x_{h}}\\ &=
		x^\top P  \Big( P^{-1}( \mathcal A  \!+\!  \mathcal G \mathcal{F} )^\top P ( \mathcal A   \! +\!  \mathcal G \mathcal{F} )P^{-1} \!-\! P^{-1} +\tilde{P}_{1}  \Big) P x\\ &~~~ +  x_{h}^\top P  \Big(P^{-1}(\mathcal A_1 \!+\! \mathcal G \mathcal{F}_1 )^\top  P (\mathcal A_1 \!+\! \mathcal G \mathcal{F}_1 )P^{-1} \!-\! \tilde{P}_1 \Big)  P x_{h}\\&~~~  + 2 x_{h}^\top P \Big(P^{-1} (\mathcal A_1 + \mathcal G \mathcal{F}_1 )^\top P ( \mathcal A   +  \mathcal G \mathcal{F} )P^{-1}\Big)  P x \\ &~~~+ \mathsf{Tr}(E^\top P E\underbrace{\EE[w  w^\top]}_{{\mathbf{I}_n}}).
	\end{align*}
	Then, to satisfy $\big[ \mathcal B(\mathbf{x}_{k+1})  \,\,\,\allowbreak \big|\,\,\, \mathbf{x}_k=\mathbf{x},u_k=u\big] - \mathcal B({\mathbf{x}}) \leq  \eta$ in~\eqref{subeq:decreasing}, it is sufficient to show
	\begin{align}\label{step1-R}
		\begin{bmatrix}
			Px \\
			P x_{h}
		\end{bmatrix}^\top \begin{bmatrix}
			\Gamma_{11} & 	\Gamma_{12}  \\
			\star &  	\Gamma_{22}
		\end{bmatrix} \begin{bmatrix}
			P x \\
			P x_{h}
		\end{bmatrix} \leq 0,
	\end{align}
	with {$\eta = \mathsf{Tr}(E^\top P E)$}, and
	\begin{align*}
		\Gamma_{11} &= P^{-1}(\mathcal A  +  \mathcal G \mathcal{F} )^\top P ( \mathcal A   +  \mathcal G \mathcal{F} )P^{-1} -P^{-1} +\tilde{P}_{1},\\
		\Gamma_{12} &= {P^{-1}(\mathcal A + \mathcal G \mathcal{F} )^\top P ( \mathcal A_1   +  \mathcal G \mathcal{F}_1 )P^{-1},} \\
		\Gamma_{22} &=  P^{-1}(\mathcal A_1 + \mathcal G \mathcal{F}_1 )^\top  P (\mathcal A_1 + \mathcal G \mathcal{F}_1 )P^{-1}  - \tilde{P}_1.
	\end{align*}
	By setting $\mathcal{Z}= \mathcal{F}P^{-1}$, $\mathcal{Z}_1= \mathcal{F}_1P^{-1}$, and since ${P}_1 = P\tilde{P}_1P$, according to Schur complement~\citep{zhang2006schur}, condition in~\eqref{step1-R} is satisfied if 
	\begin{align}\label{step3-R}
		\begin{bmatrix}
			P^{-1} - \tilde{P}_1 & \mathbf{0}& P^{-1} \mathcal A^\top  +  \mathcal Z^\top  \mathcal G^\top  \\
			\star &  \tilde{P}_1 & P^{-1} \mathcal A^\top_1 + \mathcal Z^\top_1  \mathcal G^\top  \\
			\star & \star  &  P^{-1}
		\end{bmatrix} \succeq\! 0.
	\end{align}
	As this is condition~\eqref{con1-R}, which is enforced, it follows that~\eqref{subeq:decreasing} holds, thereby concluding the proof.$\hfill\blacksquare$

\begin{remark}
	One can consider \( P^{-1} = \mathcal{C} \) in \eqref{con1-R} and solve for \( \mathcal C \) such that it is symmetric and positive definite. Once \( \mathcal{C}  \) is determined, its inverse will yield the matrix \( P \).
\end{remark}

To compute the K-QCBC and its corresponding safety controllers without input constraints, condition \eqref{con1-R} can be expressed as an SOS condition as the following lemma.
\begin{lemma}[\textbf{SOS Program for \eqref{con1-R}}]
	Consider the state set $X$, the initial set $X_a $, and the unsafe set $X_b$, each of which is outlined by vectors of polynomial inequalities as $ {X}=\left\{ x  \in \mathbb{R}^n  \,\,\,\big|\,\,\, \mathcal{J}(x) \geq 0\right\}$, $X_{a}=\left\{x \in \mathbb{R}^n  \,\,\,\big|\,\,\, \mathcal{J}_{a}(x) \geq 0\right\}$, and $ X_{b}=\left\{x \in \mathbb{R}^n  \,\,\,\big|\,\,\, \mathcal{J}_{b}(x) \geq 0\right\}$, respectively. Then, condition~\eqref{con1-R} holds if there exist matrices $P \succ 0$, $\tilde{P}_1 \succeq 0$, state-dependent polynomial matrices $\mathcal{Z}$, $\mathcal{Z}_1$, and vectors of SOS polynomials, {$
		\mathcal{Y}(x,x_h)$, and  $\mathcal{Y}_{h}(x,x_{h})$ such that}
	\begin{align}
		&\begin{bmatrix}\notag
			P^{-1} \!-\! \tilde{P}_1 & \mathbf{0}& P^{-1} \mathcal A^\top  +  \mathcal Z^\top  \mathcal G^\top  \\
			\star &  \tilde{P}_1 & P^{-1} \mathcal A^\top_1 + \mathcal Z^\top_1  \mathcal G^\top  \\
			\star  & \star  &  P^{-1}
		\end{bmatrix}\\\label{L1-relaxed} &~~~- {\big( \mathcal{Y}(x,x_h )^\top\mathcal{J}(x) + \mathcal{Y}_h(x,x_{h} )^\top  \mathcal{J}(x_{h}) \big)\mathbf{I}_{3n}},
	\end{align}
	is an SOS polynomial.
\end{lemma}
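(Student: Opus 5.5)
The argument follows the proof of Lemma~\ref{SOS_K-QCBC}: it is a positivstellensatz-type (S-procedure) relaxation. Denote the $3n\times 3n$ polynomial matrix on the left of \eqref{con1-R} by $\mathbf{M}(x,x_h)$. We read the SOS requirement on \eqref{L1-relaxed} in the matrix sense: the polynomial matrix
\[
\mathbf{M}(x,x_h)-\big(\mathcal{Y}(x,x_h)^\top\mathcal{J}(x)+\mathcal{Y}_h(x,x_h)^\top\mathcal{J}(x_h)\big)\mathbf{I}_{3n}
\]
is an SOS matrix. Equivalently, $v^\top(\cdot)v$ is an SOS polynomial in $(x,x_h,v)$ for an auxiliary vector $v\in\mathbb{R}^{3n}$. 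Hence this matrix is positive semidefinite for every $(x,x_h)\in\mathbb{R}^{2n}$, not only on $X^2$.

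Next we bound the multiplier term on the relevant domain. Fix $(x,x_h)\in X^2$. By the description of $X$, every component of $\mathcal{J}(x)$ and of $\mathcal{J}(x_h)$ is nonnegative. Every entry of $\mathcal{Y}$ and $\mathcal{Y}_h$ is an SOS polynomial, so each is nonnegative at every point. Therefore the scalar
\[
\sigma(x,x_h):=\mathcal{Y}(x,x_h)^\top\mathcal{J}(x)+\mathcal{Y}_h(x,x_h)^\top\mathcal{J}(x_h)
\]
is a sum of products of nonnegative numbers, and $\sigma(x,x_h)\ge 0$. It follows that $\sigma(x,x_h)\mathbf{I}_{3n}\succeq 0$.

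Combining the two steps, for every $(x,x_h)\in X^2$ we get $\mathbf{M}(x,x_h)\succeq \sigma(x,x_h)\mathbf{I}_{3n}\succeq 0$, which is exactly \eqref{con1-R}. Together with Proposition~\ref{new87}, this means the controller $u=\mathcal{Z}P x+\mathcal{Z}_1P x_h$ enforces \eqref{subeq:decreasing}, using the substitutions $\mathcal{F}=\mathcal{Z}P$ and $\mathcal{F}_1=\mathcal{Z}_1P$.

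The only real subtlety is the meaning of an SOS matrix polynomial. We take it in the standard sense: the matrix factors as $\mathbf{N}^\top\mathbf{N}$ with $\mathbf{N}$ a polynomial matrix, equivalently $v^\top(\cdot)v$ is SOS. Either formulation gives pointwise positive semidefiniteness directly. One further point is needed for this to be an implementable SOS program: the entries must depend affinely on the decision variables. Because $P^{-1}$ enters \eqref{con1-R} linearly, we follow the preceding remark and treat $\mathcal{C}=P^{-1}$ as the decision variable. This caveat does not affect the logical implication being proved.
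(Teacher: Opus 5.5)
Your proof is correct and follows essentially the same route as the paper. The paper omits this proof and defers to the reasoning of Lemma~\ref{SOS_K-QCBC}: an SOS polynomial matrix is positive semidefinite pointwise, and the multiplier term $\big(\mathcal{Y}^\top\mathcal{J}(x)+\mathcal{Y}_h^\top\mathcal{J}(x_h)\big)\mathbf{I}_{3n}$ is nonnegative on $X^2$, which is exactly your two-step S-procedure argument; your extra remarks on taking $P^{-1}$ as the decision variable and recovering $\mathcal{F}=\mathcal{Z}P$, $\mathcal{F}_1=\mathcal{Z}_1P$ are consistent with the paper's remark and Proposition~\ref{new87}.
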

{\bf Proof.}
	The proof follows a similar reasoning as that of Lemma~\ref{SOS_K-QCBC} and is omitted here due to space limitations.$\hfill\blacksquare$

\begin{remark}
	Given the quadratic structure of the K-QCBC, one can exploit this property to compute $\gamma_a$ and $\gamma_b$ as	\begin{subequations}\label{lemma-cons}
		\begin{align}\label{lemma:con1}
			& \gamma_a= (\lambda_{\max }({P})  + h\lambda_{\max }({P_1})) \max _{{x} \in {X}_{a}}\Vert {x}   \Vert^2, \\\label{lemma:con2}
			&\gamma_b=\lambda_{\min }({P}) \min _{x \in {X}_{b} }\Vert x   \Vert^2.
		\end{align}
	\end{subequations}
	This approach can be potentially less computationally demanding than the SOS-based conditions required for computing $\gamma_a$ and $\gamma_b$ in~\eqref{L1} and~\eqref{L2}.
\end{remark}

We present Algorithm~\ref{Alg1}, which outlines the steps for computing the K-QCBC and the corresponding safety controllers with input constraints.

\begin{algorithm}[t]
	\caption{Computation of K-QCBC and safety controllers}
	\label{Alg1}
	\begin{algorithmic}[1]
		\REQUIRE dt-SNPS-td $\Sigma$, $\Upsilon = (X_a, X_b, \mathcal T)$\\
		\STATE Select $\alpha$
		\IF{\eqref{con2}, \eqref{con3}, \eqref{L1}--\eqref{L4}, using \textsf{SeDuMi}~\citep{Sturm} and \textsf{SOSTOOLS}~\citep{prajna2004sostools}
			are satisfied}
		\RETURN $P$, $P_1$, $\gamma_a$, $\gamma_b$, $\mathcal{F}$, $\mathcal{F}_1$
		\ELSE
		\STATE Adjust $\alpha$
		\ENDIF\\
		\STATE Using  ${P}$, compute {$\eta = \mathsf{Tr}(E^\top P E)$}
		\STATE Compute the {safety guarantee $1-\mu_h = 1-\frac{\gamma_a + \eta \mathcal{T}}{\gamma_b}$}
		\ENSURE K-QCBC $\mathcal{B}(\mathbf x)=x^\top P x + \sum_{i=1}^{h} x^\top_{i} P_1 x_{i}$, probabilistic safety guarantee~in~\eqref{eq:safety_prob}, safety controller $u= \mathcal F x + \mathcal F_1x_{h}$, while enforcing the input constraint in~\eqref{input_set} 
	\end{algorithmic}
\end{algorithm}

\section{Krasovskii Polynomial Control Barrier Certificates}\label{subsec:problem_G}
In this section, we present a more general result in the sense of the selection of \emph{Krasovskii} control barrier certificates by allowing them to possess higher polynomial degrees. Specifically, we introduce the Krasovskii \emph{polynomial }control barrier certificates (K-PCBC) as
\begin{equation}\label{K-PCBC_temp} 
	\mathcal{B}(\mathbf{x})= g(x) + \sum_{i=1}^{h} \tilde{g}(x_i), \end{equation} 
where $g(x), \tilde{g}(x_i)$ are both polynomials. The K-PCBC can be defined analogously to Definition~\ref{def:KCBC}, with the same set of conditions, with the only distinction being that the barrier function is polynomial, as given in~\eqref{K-PCBC_temp}. Under conditions~\eqref{subeq:initial}–\eqref{subeq:decreasing} and with the K-PCBC defined in~\eqref{K-PCBC_temp}, one can invoke Proposition~\ref{Th:safety} to establish the safety of the dt-SNPS-td~$\Sigma$, ensuring robustness to time delays. 

To obtain the K-PCBC and its associated safety controllers, we state the following lemma that reformulates the corresponding conditions as an SOS optimization problem. 
\begin{lemma}[\textbf{K-PCBC: SOS Program}] \label{sos}
	Consider the state set $X$, the initial set $ X_a$, the unsafe set $X_b$, which are outlined by vectors of polynomial inequalities as $ {X}=\left\{ x  \in \mathbb{R}^n  \,\,\,\big|\,\,\, \mathcal{J}(x) \geq 0\right\}$, $X_{a}=\left\{x \in \mathbb{R}^n  \,\,\,\big|\,\,\, \mathcal{J}_{a}(x) \geq\right.$ $0\}$, $ X_{b}=\left\{x \in \mathbb{R}^n  \,\,\,\big|\,\,\, \mathcal{J}_{b}(x) \geq 0\right\}$, respectively, {and input set $U$ as presented in \eqref{input_set}.} 
	Let there exist an SOS polynomial $\mathcal{B}(\mathbf{x})= g(x) + \sum_{i=1}^{h} \tilde{g}(x_i)$, constants $\gamma_a,\gamma_b,\eta\in \mathbb{R}^+$, polynomial $\mathcal{Y}_{u_q}(x,x_h)$, corresponding to the $q^{\text {th }}$ input in  $u=[u_1; u_2; \ldots; u_{m}] \in U \subset \mathbb{R}^m$ and {$\mathcal{Y}_{u}(x,x_h)=[\mathcal{Y}_{u_1}(x,x_h); \mathcal{Y}_{u_2}(x,x_h); \ldots; \mathcal{Y}_{u_m}(x,x_h)]$}. Suppose there exist vectors of SOS polynomials
	{$\tilde{\mathcal{Y}}_j(x,x_h,u)$, $\mathcal{Y}_j(x,x_h)$, $\mathcal{Y}_{h_j}(x,x_{h})$}, $j=1,\cdots,J$, $\mathcal{Y}_{a}(x), \bar{\mathcal{Y}}_{a}(\bar{x})$, $\mathcal{Y}_b(x), \bar{\mathcal{Y}}_{b}(\bar{x})$, {$\mathcal{Y}(x,x_h,u)$, and $\mathcal{Y}_h(x,x_h,\allowbreak u)$} of appropriate dimensions such that the following expressions are all SOS polynomials:
\begin{subequations}\label{eq:sos}
	\begin{align}
		-&g(x) - h\, \tilde{g}(\bar{x})
		-  \mathcal{Y}_{a}^{\top}(x)\,\mathcal{J}_{a}(x)
		-  \bar{\mathcal{Y}}_{a}^{\top}(\bar{x})\,\mathcal{J}_{a}(\bar{x})
		+ \gamma_a, \label{eq:sos2}\\
		&g(x) + h\, \tilde{g}(\bar{x})
		- \mathcal{Y}_{b}^{\top}(x)\,\mathcal{J}_{b}(x)
		- \bar{\mathcal{Y}}_{b}^{\top}(\bar{x})\big(\mathcal{J}(\bar{x})-\mathcal{J}_b(\bar{x})\big)
		\notag\\
		&~~- \gamma_b, \label{eq:sos3}\\
		-&\EE\!\left[
		g(\mathcal Ax  \!+\! \mathcal A_1 x_h  \!+\!   \mathcal G u \!+\! Ew)
		\,\,\big|\,\,\mathbf{x},u\right]+ \eta 
		+ g(x) - \tilde{g}(x)\notag
	\end{align}
	\begin{align}
		&~~+ \tilde{g}\!\left(x_h\right) \!-\! \sum_{q=1}^{m}\big(u_q  \!-\! \mathcal{Y}_{u_q}(x,x_h) \big) 	\!-\! \mathcal{Y}^{\top}\!(x,x_h,u)\,\mathcal{J}(x)\notag\\
		&~~ - \mathcal{Y}^{\top}_h(x,x_h,u)\,\mathcal{J}(x_h) - \sum_{j=1}^{J}(1-b^\top_ju)\tilde{\mathcal{Y}}_j(x,x_h,u)\label{eq:sos5},\\
		&1\!-\!b^\top_j\mathcal{Y}_{u}(x,x_h)-\mathcal{Y}_j(x,x_h)\mathcal{J}(x)
		-\mathcal{Y}_{h_j}(x,x_{h})\mathcal{J}(x_h),\notag\\
		&\qquad j=1,\ldots,J. \label{eq:sos4}
	\end{align}
\end{subequations}
Then, $\mathcal{B}(\mathbf{x})= g(x) + \sum_{i=1}^{h} \tilde{g}(x_i)$, satisfying conditions~\eqref{subeq:initial}–\eqref{subeq:decreasing}, and {$u=[u_1; u_2; \ldots; u_{m}]=[\mathcal{Y}_{u_1}(x,x_h); \mathcal{Y}_{u_2}(x,x_h); \ldots; \mathcal{Y}_{u_m}(x,x_h)]$} is the corresponding safety controller.
\end{lemma}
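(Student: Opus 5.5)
The plan is to use the standard Positivstellensatz-type reasoning already employed in the proof of Lemma~\ref{SOS_K-QCBC}. Each SOS expression in \eqref{eq:sos} is nonnegative everywhere, and every multiplier term is nonnegative on the relevant set. Subtracting these multiplier terms therefore shows that the unmultiplied expression is nonnegative on that set. I would treat the four conditions in order and reduce each one to the corresponding condition in Definition~\ref{def:KCBC}, evaluated on the $(h+1)$-fold products. For the two set conditions this reduction uses the separable structure $\mathcal{B}(\mathbf{x})=g(x)+\sum_{i=1}^h\tilde g(x_i)$. For the decrease condition it uses the telescoping identity derived in the proof of Theorem~\ref{Th:decay}.

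\emph{Initial condition.} For $(x,\bar x)\in X_a^2$ the terms $\mathcal{Y}_a^\top(x)\mathcal{J}_a(x)$ and $\bar{\mathcal{Y}}_a^\top(\bar x)\mathcal{J}_a(\bar x)$ are nonnegative. Hence \eqref{eq:sos2} gives $g(x)+h\tilde g(\bar x)\le\gamma_a$ on $X_a^2$. Taking suprema termwise, exactly as in the proof of Theorem~\ref{Th:decay}, we get $\sup_{X_a^{h+1}}\mathcal{B}=\sup_{X_a}g+h\sup_{X_a}\tilde g\le\gamma_a$, which is \eqref{subeq:initial}.

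\emph{Unsafe condition.} Here $\bar x$ ranges over $X\setminus X_b$, encoded by the multiplier on $\mathcal{J}(\bar x)-\mathcal{J}_b(\bar x)$. I read that multiplier as a sign-certificate for this complement region, which needs a small remark since a set difference is not literally a basic semialgebraic set. With that reading, \eqref{eq:sos3} gives $g(x)+h\tilde g(\bar x)\ge\gamma_b$ for $x\in X_b$ and $\bar x\in X\setminus X_b$. Taking infima termwise over $X_b\times(X\setminus X_b)^h$ then yields \eqref{subeq:unsafe}.

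\emph{Input admissibility.} Condition \eqref{eq:sos4} together with $\mathcal{J}\ge0$ on $X$ gives $1-b_j^\top\mathcal{Y}_u(x,x_h)\ge0$ for all $(x,x_h)\in X^2$. So the controller $u=\mathcal{Y}_u(x,x_h)$ takes values in $U$.

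\emph{Decrease condition.} This is the main step. From \eqref{telescopic4}, which used only the additive structure of $\mathcal{B}$ and not its quadratic form, we have
\[
\EE[\mathcal{B}(\mathbf{x}_{k+1})\mid\mathbf{x},u]-\mathcal{B}(\mathbf{x})=\EE[g(x_{k+1})\mid\mathbf{x},u]+\tilde g(x)-g(x)-\tilde g(x_h).
\]
I would redo that telescoping computation with $g,\tilde g$ in place of the quadratic forms.

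Next, evaluate \eqref{eq:sos5} at $u=\mathcal{Y}_u(x,x_h)$ with $(x,x_h)\in X^2$. The term $\sum_q(u_q-\mathcal{Y}_{u_q})$ vanishes. The multiplier terms in $\mathcal{J}(x)$ and $\mathcal{J}(x_h)$ are nonnegative. The input-constraint multipliers $(1-b_j^\top u)\tilde{\mathcal{Y}}_j$ are also nonnegative, because $u\in U$ by the previous step. Nonnegativity of \eqref{eq:sos5} then gives $\EE[g(\cdot)]-g(x)+\tilde g(x)-\tilde g(x_h)\le\eta$, which is \eqref{subeq:decreasing} for all $\mathbf{x}\in X^{h+1}$, since only $x$ and $x_h$ enter.

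The expectation over Gaussian $w$ of a polynomial is a polynomial in $(x,x_h,u)$, obtained by substituting the moments of $\mathcal{N}(\mathbf{0},\mathbf{I}_n)$. So \eqref{eq:sos5} is a genuine polynomial, and the SOS argument applies to it.

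The hard part is the handling of $u$. The expression \eqref{eq:sos5} is stated for all $u$, while the bound is needed only at the feedback value. The linear term $\sum_q(u_q-\mathcal{Y}_{u_q})$ is not sign-definite. I would therefore argue pointwise: SOS implies nonnegativity at every point, in particular at $u=\mathcal{Y}_u(x,x_h)$, where that term is zero. This avoids any global claim over $u$.
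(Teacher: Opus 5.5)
Your proposal is correct and follows essentially the same route as the paper's proof: it takes termwise suprema and infima over the separable $\mathcal{B}$ to reduce \eqref{subeq:initial}--\eqref{subeq:unsafe} to two-fold conditions certified by \eqref{eq:sos2}--\eqref{eq:sos3}, redoes the telescoping identity with $g,\tilde g$, evaluates \eqref{eq:sos5} pointwise at $u=\mathcal{Y}_u(x,x_h)$ to cancel the linear term, and uses \eqref{eq:sos4} for input admissibility. Two small points are in your favour. First, you explicitly invoke \eqref{eq:sos4} to justify $(1-b_j^\top u)\tilde{\mathcal{Y}}_j\ge 0$, which the paper merely asserts. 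Second, the caveat you raise about the sign of $\mathcal{J}(\bar x)-\mathcal{J}_b(\bar x)$ on $X\setminus X_b$ is a real one that the paper shares, since it simply claims this term is nonnegative ``due to the definition of the sets'', although the sign actually depends on the chosen polynomial representation of $X_b$.
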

{\bf Proof.}
	We begin by showing that the satisfaction of condition~\eqref{eq:sos2} implies condition~\eqref{subeq:initial}. Given that $\mathcal{B}(\mathbf{x})= g(x) + \sum_{i=1}^{h} \tilde{g}(x_i)$, one has
	\begin{align*}
		\sup_{\mathbf{x} \in {X}_a^{h+1}} \mathcal{B}(\mathbf{x}) 
		&=\! \sup_{x \in {X}_a} g(x) +  \sum_{i=1}^{h} \sup_{x_i \in {X}_a} \tilde{g}(x_i)\\ & = \sup_{x \in {X}_a} g(x) \!+\! h \sup_{\bar{x} \in {X}_a} \tilde{g}(\bar{x})
		\\
		&= \sup_{(x, \bar{x}) \in {X}_a^2} g(x) + h  \tilde{g}(\bar{x}).
	\end{align*} Since~\eqref{eq:sos2} is an SOS polynomial, then one has
	\begin{align*}
		g(x) + h  \tilde{g}(\bar{x})
		+  \mathcal{Y}_{a}^{\top}(x) \mathcal{J}_{a}(x) +  \bar{\mathcal{Y}}_{a}^{\top}(\bar{x}) \mathcal{J}_{a}(\bar{x})
		\leq \gamma_a.
	\end{align*}
	Since the term $\mathcal{Y}_{a}^{\top}(x) \mathcal{J}_{a}(x)+  \bar{\mathcal{Y}}_{a}^{\top}(\bar{x}) \mathcal{J}_{a}(\bar{x})$ in~\eqref{eq:sos2} is non-negative for all $(x,\bar{x}) \in X^2_a$, the SOS condition~\eqref{eq:sos2} without the nonnegative term $\mathcal{Y}_{a}^{\top}(x) \mathcal{J}_{a}(x)+  \bar{\mathcal{Y}}_{a}^{\top}(\bar{x}) \mathcal{J}_{a}(\bar{x})$ implies condition \eqref{subeq:initial}.
	Similarly, one has
	\begin{align*}
		\inf_{\mathbf{x} \in {X}_b \times ({X}\backslash {X}_b)^h } \mathcal{B}(\mathbf{x}) 
		&= \inf_{x \in {X}_b} g(x) +  \sum_{i=1}^{h} \inf_{x_i \in {X}\backslash  {X}_b} \tilde{g}(x_i) \\
		&~\geq \inf_{x \in {X}_b} g(x) + h \inf_{\bar{x} \in {X}\backslash {X}_b} \tilde{g}(\bar{x})\\
		&~= \inf_{(x,\bar{x}) \in {X}_b\times ({X}\backslash {X}_b)} g(x) + h \tilde{g}(\bar{x}).
	\end{align*}
	On the other hand, the terms $\mathcal{J}_{b}(x), (\mathcal{J}(\bar{x})-\mathcal{J}_b(\bar{x}))$ are non-negative for all $(x,\bar{x}) \in {X}_b\times({X}\backslash  {X}_b)$ due to the definition of the sets ${X}, {X}_b$. Therefore, condition~\eqref{eq:sos3} implies condition~\eqref{subeq:unsafe} due to the nonnegative term $\mathcal{Y}_{b}^{\top}(x) \mathcal{J}_{b}(x)+\bar{\mathcal{Y}_b}^{\top}(\bar{x}) (\mathcal{J}(\bar{x})-\mathcal{J}_b(\bar{x}))$. 
	
	\noindent We now show that condition~\eqref{eq:sos5} implies condition~\eqref{subeq:decreasing}, as well.
	Considering $\mathbf{x}_{k+1}$ as in~\eqref{successor}, by taking the expected value, one has
	\begin{align}\notag
		&\EE\big[ \mathcal B(\mathbf{x}_{k+1}) \,\,\,\big|\,\,\, \mathbf{x}_k,u_k\big]
		\\\notag &~~~~=\EE\big[g(x_{k+1})+ \sum_{i=1}^h \tilde{g}(x_{k-i+1})  \,\,\,\big|\,\,\, \mathbf{x}_k,u_k\big] 
		\\\label{telescopic5}&~~~~=\EE\big[g(x_{k+1}) \,\,\,\big|\,\,\, \mathbf{x}_k,u_k] \!+\! \EE\big[\sum_{i=1}^h \tilde{g}(x_{k-i+1}) \,\,\,\big|\,\,\, \mathbf{x}_k,u_k\big].
	\end{align}
	According to~\eqref{history}, and given the conditional expectation with respect to $\mathbf{x}_k$ and $u_k$, the expectation operator can be omitted since each state is conditioned on itself, \emph{i.e.,}
	\begin{equation}\label{telescopic7}
		\EE \big[\sum_{i=1}^h \tilde{g}(x_{k-i+1}) \,\,\,\big|\,\,\, \mathbf{x}_k,u_k \big] = \sum_{i=1}^h \tilde{g}(x_{k-i+1}).
	\end{equation}
	From \eqref{telescopic7}, we obtain the equivalent form of \eqref{telescopic5} as
	\begin{align}\notag
		&\EE\big[ \mathcal B(\mathbf{x}_{k+1}) \,\,\,\big|\,\,\, \mathbf{x}_k,u_k\big]\\\notag
		&~~~~=\EE\big[g(x_{k+1})  \,\,\,\big|\,\,\, \mathbf{x}_k,u_k\big] 
		+\sum_{i=1}^h \tilde{g}(x_{k-i+1}) \\\label{telescopic8}
		&~~~~=\EE\big[g(x_{k+1}) \,\,\,\big|\,\,\, \mathbf{x}_k,u_k \big] 
		+\tilde{g}(x_k)
		+\sum_{i=2}^{h} \tilde{g}(x_{k-i+1}).
	\end{align}
	Given $\mathbf{x}_k$ in \eqref{history}, by subtracting $\mathcal{B}(\mathbf{x}_k)$ from both sides of~\eqref{telescopic8} and re-indexing the summation over delayed states, we have
	\begin{align*}
		&\EE\big[ \mathcal B(\mathbf{x}_{k+1})  \,\,\,\big|\,\,\, \mathbf{x}_k,u_k\big] - \mathcal B(\mathbf{x}_k)  \\ 
		&~~~~= \EE\big[g(x_{k+1})  \,\,\,\big|\,\,\, \mathbf{x}_k,u_k\big] 	+ \tilde{g}(x_k)  + \underbrace{\sum_{i=1}^{h-1} \tilde{g}(x_{k-i})}_{\text{re-indexed}}\\[-5ex] &~~~~~  -\overbrace{\big(g(x_k) + \sum_{i=1}^h \tilde{g}(x_{k-i})\big)}^{\mathcal B(\mathbf{x}_k)}.
	\end{align*}
	By cancellation of the overlapping terms in the sums over \( \tilde{g}(x_{k-i}) \), we simplify the series and obtain
	\begin{align}\notag
		&\EE\big[ \mathcal B(\mathbf{x}_{k+1})  \,\,\,\big|\,\,\, \mathbf{x}_k,u_k\big] - \mathcal B(\mathbf{x}_k)  \\\notag  
		&~~~= \EE\big[g(x_{k+1})  \,\,\,\big|\,\,\, \mathbf{x}_k,u_k\big] 
		+ \tilde{g}(x_k) - g(x_k) - \tilde{g}(x_{k-h}).
	\end{align}
	From \eqref{history-simple}, we set \( x_{k-h} = x_h \) as the state delayed by \( h \) steps and \( x_k = x \) as the current state. Then, by substituting ${x}_{k+1}$ with the dynamics of dt-SNPS-td in~\eqref{eq:dt-SNPS-td}, we obtain      
	\begin{align}\notag
		&\EE\big[ \mathcal B(\mathbf{x}_{k+1})  \,\,\,\big|\,\,\, \mathbf{x}_k=\mathbf{x},u_k=u\big] - \mathcal B(\mathbf{x}_k)  \\\notag &~~~~= \EE\big[g(\mathcal Ax  + \mathcal A_1 x_h  +   \mathcal G u+ Ew)  \,\,\,\big|\,\,\, \mathbf{x},u\big] 
		+ \tilde{g}(x) \\\label{telescopic9} & \qquad - g(x)  - \tilde{g}(x_h).
	\end{align}
	{Since condition~\eqref{eq:sos5}  is an SOS polynomial, one has
		\begin{align*}
			&-\!\EE\!\left[
			g(\mathcal Ax  + \mathcal A_1 x_h  +   \mathcal G u+ Ew)
			\,\,\,\big|\,\,\,\mathbf{x},u\right]
			\!+\! g(x) \!-\! \tilde{g}(x)\\ &~~ \!+\! \tilde{g}\!\left(x_h\right) \!+\! \eta
			\!-\! \sum_{q=1}^{m}\big(u_q  \!-\! \mathcal{Y}_{u_q}(x,x_h) \big)
			\!-\! \mathcal{Y}^{\top}(x,x_h,u)\,\mathcal{J}(x)\notag\\
			&~~ - \mathcal{Y}^{\top}_h(x,x_h,u)\,\mathcal{J}(x_h) -\sum_{j=1}^{J}(1-b^\top_ju)\tilde{\mathcal{Y}}_j(x,x_h,u) \geq 0.
	\end{align*}}
As the terms $\mathcal{Y}^{\top}(x,x_h,u)\mathcal{J}(x)$, $\mathcal{Y}^{\top}_h(x,x_h,u)\mathcal{J}(x_h)$, $\sum_{j=1}^{J}(1-b^\top_ju) \allowbreak \tilde{\mathcal{Y}}_j(x,x_h,u)$ are all non-negative, and by selecting inputs $u_q=\mathcal{Y}_{u_q}(x,x_h)$, $q=1,\dots,m$, which results in a zero outcome for the term $\sum_{q=1}^{m}\big(u_q  - \mathcal{Y}_{u_q}(x,\allowbreak x_h) \big)$ in \eqref{eq:sos5}, one can verify that satisfaction of condition \eqref{eq:sos5} with the non-negative terms $\mathcal{Y}^{\top}(x,x_h,u)\mathcal{J}(x)$, $\mathcal{Y}^{\top}_h(x,x_h,u)\allowbreak\mathcal{J}(x_h)$, and $\sum_{j=1}^{J}(1-b^\top_ju) \allowbreak \tilde{\mathcal{Y}}_j(x,x_h,u)$ implies condition~\eqref{telescopic9}, and accordingly\allowbreak~\eqref{subeq:decreasing}. Since condition~\eqref{eq:sos4} is an SOS polynomial, the satisfaction of condition~\eqref{eq:sos4} with the non-negative terms $\mathcal{Y}_j(x,x_h)\mathcal{J}(x)$ and $\mathcal{Y}_{h_j}(x,x_h)\mathcal{J}(x_h)$, $j=1,\dots,J$ implies that the input constraints in~\eqref{input_set} are also met. It establishes $\mathcal{B}(\mathbf{x})= g(x) + \sum_{i=1}^{h} \tilde{g}(x_i)$ as a K-PCBC, which completes the proof.$\hfill\blacksquare$

\begin{remark} 
	It is worth noting that a bilinearity initially exists between the decision variables of the K-PCBC and those of the controller when evaluating the expectation in~\eqref{subeq:decreasing} over the underlying dynamics. We address this bilinearity by introducing the polynomial $\mathcal{Y}_{u_q}(x,x_h)$, which can be a state-feedback controller of the form given in \eqref{feedback}, specifically $\mathcal{Y}_u(x,x_h) = F(x_k, x_{k-h}) x_k + F_1(x_k, x_{k-h}) x_{k-h}$, as the controller in~\eqref{eq:sos5}. This reformulation allows $u$ in $\mathcal Ax  + \mathcal A_1 x_h  +   \mathcal G u+ Ew$ to serve as a symbolic variable analogous to $x$, rather than as a polynomial controller with undetermined coefficients, while the explicit polynomial $\mathcal{Y}_{u_q}(x,x_h)$ acts as the controller to be designed. {While this approach introduces conservativeness into the controller synthesis problem due to the enforcement of the condition in~\eqref{eq:sos5} with $u_q=\mathcal{Y}_{u_q}(x,x_h)$, $q=1,\dots,m$,} it offers the advantage of resolving the bilinear complexity.
\end{remark}

\begin{remark}
	There is no restriction on the choice of noise distribution in the K-PCBC formulation and it can be selected arbitrarily. In such cases, the expected value in~\eqref{eq:sos5} should be computed with respect to the chosen distribution when solving the problem using \textsf{SOSTOOLS} .
\end{remark}

\begin{remark}
	Our definitions of K-QCBC and K-PCBC accommodate multiple unsafe regions. In this context, condition \eqref{subeq:unsafe} should be repeated for all unsafe sets (cf. case studies).
\end{remark}

\begin{algorithm}[t]
	\caption{Computation of K-PCBC and safety controller}
	\label{Alg2}
	\begin{algorithmic}[1]
		\REQUIRE dt-SNPS-td $\Sigma$, $\Upsilon = (X_a, X_b, \mathcal T)$
		\IF{\eqref{eq:sos2}-\eqref{eq:sos4}, using \textsf{SOSTOOLS}~\citep{prajna2004sostools}
			and \textsf{SeDuMi}~\citep{Sturm}, are satisfied}
		\RETURN  $\gamma_a$, $\gamma_b$,\\ $[\mathcal{Y}_{u_1}(x,x_h); \mathcal{Y}_{u_2}(x,x_h); \ldots; \mathcal{Y}_{u_m}(x,x_h)]$
		\ELSE
		\STATE Adjust the degree of K-PCBC $\mathcal{B}(\mathbf{x})$
		\ENDIF
		\STATE Compute the {safety guarantee $1-\mu_h = 1-\frac{\gamma_a + \eta \mathcal{T}}{\gamma_b}$}
		\ENSURE K-PCBC $\mathcal{B}(\mathbf{x})= g(x) + \sum_{i=1}^{h} \tilde{g}(x_i)$, safety controller $u=[\mathcal{Y}_{u_1}(x,x_h); \mathcal{Y}_{u_2}(x,x_h); \ldots; \mathcal{Y}_{u_m}(x,x_h)]$, and probabilistic safety guarantee~in~\eqref{eq:safety_prob}
	\end{algorithmic}
\end{algorithm}

We present Algorithm~\ref{Alg2}, which details the steps for {computing the K-PCBC} and the corresponding safety controllers with input constraints.

\begin{table*}[t!]
	\makeatletter
	\long\def\@makecaption#1#2{%
		\vskip\abovecaptionskip
		\noindent \textbf{#1.} #2\par
		\vskip\belowcaptionskip}
	\makeatother
	\centering
	\caption{Overview of results for dt-SNPS-td systems, where $\mathcal{T}$ is the safety horizon, {$\mathsf {SD}$ is the system degree,} $n$ the state dimension, $h$ the delay, $\eta$ as in~\eqref{subeq:decreasing}, $\gamma_a$, and $\gamma_b$ define level sets, $\mu_h$ and $\alpha$ (cf.~\eqref{con2}–\eqref{con3}) denote the {safety violation bound} and tuning factor, and $\mathsf{RT}$ (s) and $\mathsf{MU}$ (MB) report the computational time and memory usage for solving the SOS problems via Algorithms~\ref{Alg1} and~\ref{Alg2}.}
\label{tab:system-configurations}
\begin{tabular}{@{}llccccccccccc@{}}
	\toprule
	Case study & Method & $\mathcal{T}$& $\mathsf {SD}$ &  $n$& $h$ & $\eta$ & $\gamma_a$ & $\gamma_b$ & $\alpha$ & $1-\mu_h$ & \(\mathsf{RT}\) & $\mathsf{MU}$ \\ 
	\midrule\midrule
	
	\multirow{2}{*}{Academic system} 
	& K-QCBC & $40$ & $2$ &  $2$ & $3$ &  $0.001$ &   $0.01$ &  $0.64$  &  $0.00001$ & {$0.9$} & $2.41$ & $1.6$\\
	& K-PCBC & $40$ & $2$ &  $2$ & $3$ &  $6.31 \times 10^{-4}$      &     $0.1$    &  $  11.08$     &  ---       &    {$0.98$}   & $33.97$    & $1.73$   \\
	\midrule
	
	\multirow{2}{*}{Jet engine compressor} 
	& K-QCBC & $60$  & $3$ & $2$ &   $4$  &  $7.65 \times 10^{-4}$ &  $0.01$ &  $0.83$ &  $0.00012$  & {$0.92$} & $2.46$  & $1.67$ \\
	& K-PCBC & $60$  & $3$ & $2$ &   $4$  &  $ 5.81 \times 10^{-4}$ &  $0.21$    &  $ 10.73$    &  ---        & {$0.97$}     & $34.12$     & $1.79$     \\
	\midrule
	
	\multirow{2}{*}{Spacecraft}      
	& K-QCBC & $20$  & $2$ & $3$ &   $3$  & $0.003$  & $0.25$  & $3.77$ & $0.00001$ & {$0.91$} & $378.6$  & $3.61$   \\
	& K-PCBC & $20$  & $2$ & $3$ &   $3$  & $5.67 \times 10^{-4}$      &  $0.76$    & $ 38.22$  &   ---    & {$0.97$}  & $411.62$      & $4.26$     \\
	\bottomrule
\end{tabular}
\end{table*}

\subsection{Discussion}\label{sec:Discussion}
We now provide a discussion on the comparative characteristics of the proposed K-QCBC and K-PCBC formulations. Between K-QCBC~\eqref{k-CBC} and K-PCBC~\eqref{K-PCBC_temp}, the latter is generally more computationally expressive and more likely to yield feasible safety certificates and corresponding controllers with a higher probability. This advantage arises from allowing polynomial, rather than strictly quadratic barrier functions, providing greater flexibility in capturing nonlinear safety boundaries. However, this expressiveness comes at a computational cost: K-PCBC typically leads to higher-order optimization problems with more decision variables, increasing both solution time and memory requirements (cf. Table~\ref{tab:system-configurations}). In contrast, K-QCBC represents a structured quadratic subclass whose algebraic simplicity enables the reformulation of safety conditions into tractable matrix inequalities in~\eqref{K-PCBC}. This structure enhances scalability and computational tractability, albeit at the expense of additional conservatism relative to K-PCBC.

Furthermore, when searching for K-QCBC, incorporating input constraints results in conditions~\eqref{con2}–\eqref{con3}, which can potentially introduce conservatism due to their sensitivity to the tuning parameter $\alpha$. Conversely, omitting the input constraints relaxes these conditions (cf.~\eqref{con1-R}), thereby reducing conservatism and improving the feasibility of the resulting optimization problem, albeit at the cost of not explicitly enforcing input constraints.

\begin{figure*}[t]
	\centering
	\subfloat[\label{fig:Aa}]{
		\includegraphics[width=0.23\textwidth]{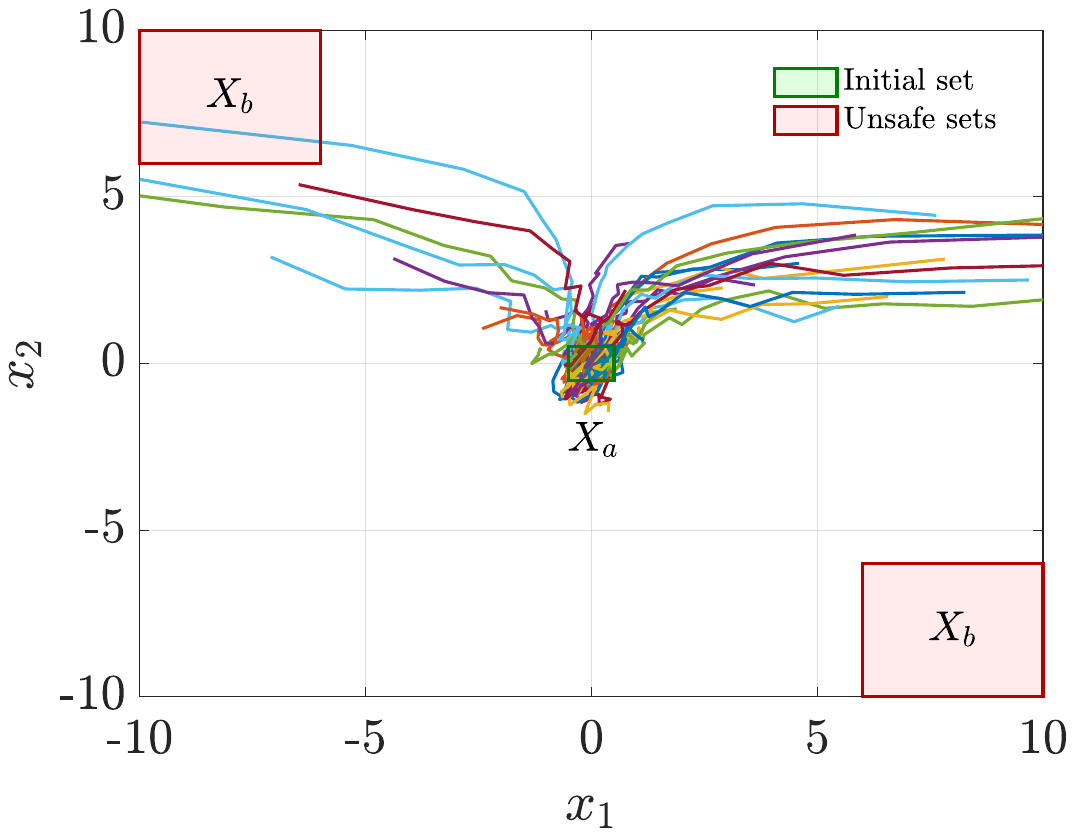}
	}\hspace{0.1cm}
	\subfloat[\label{fig:Ab}]{
		\includegraphics[width=0.23\textwidth]{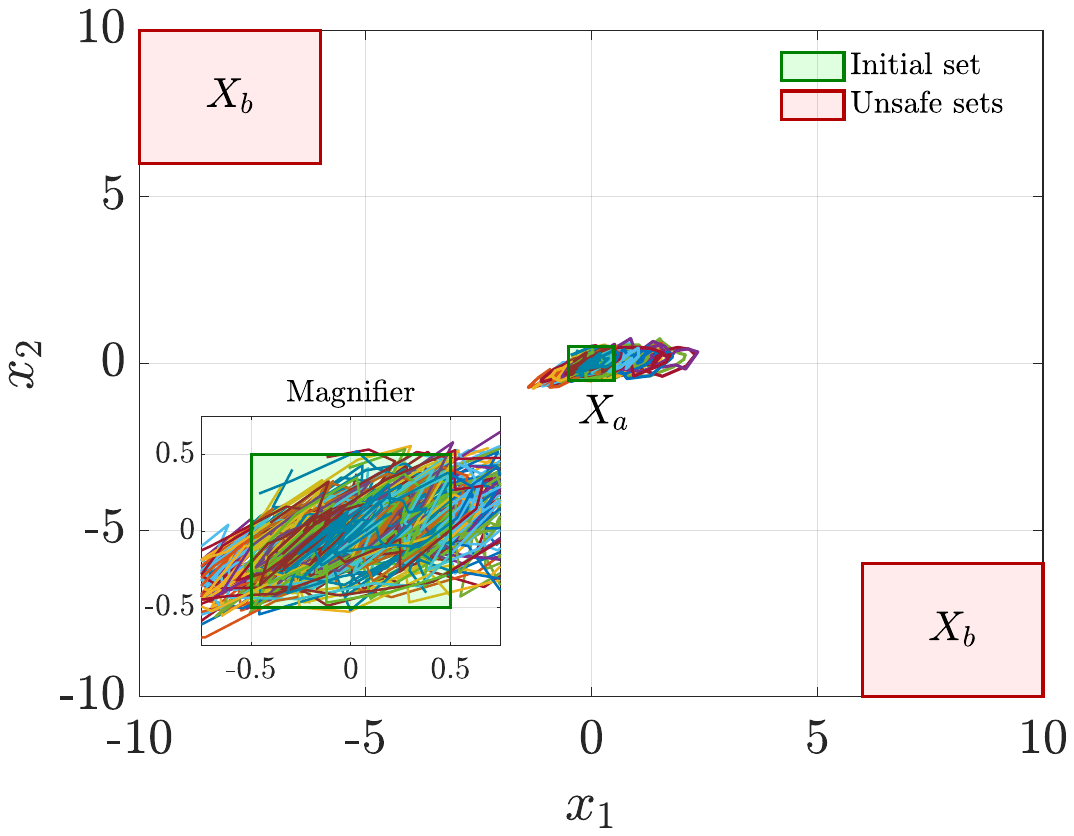}
	}\hspace{0.1cm}
	\subfloat[\label{fig:Ac}]{
		\includegraphics[width=0.23\textwidth]{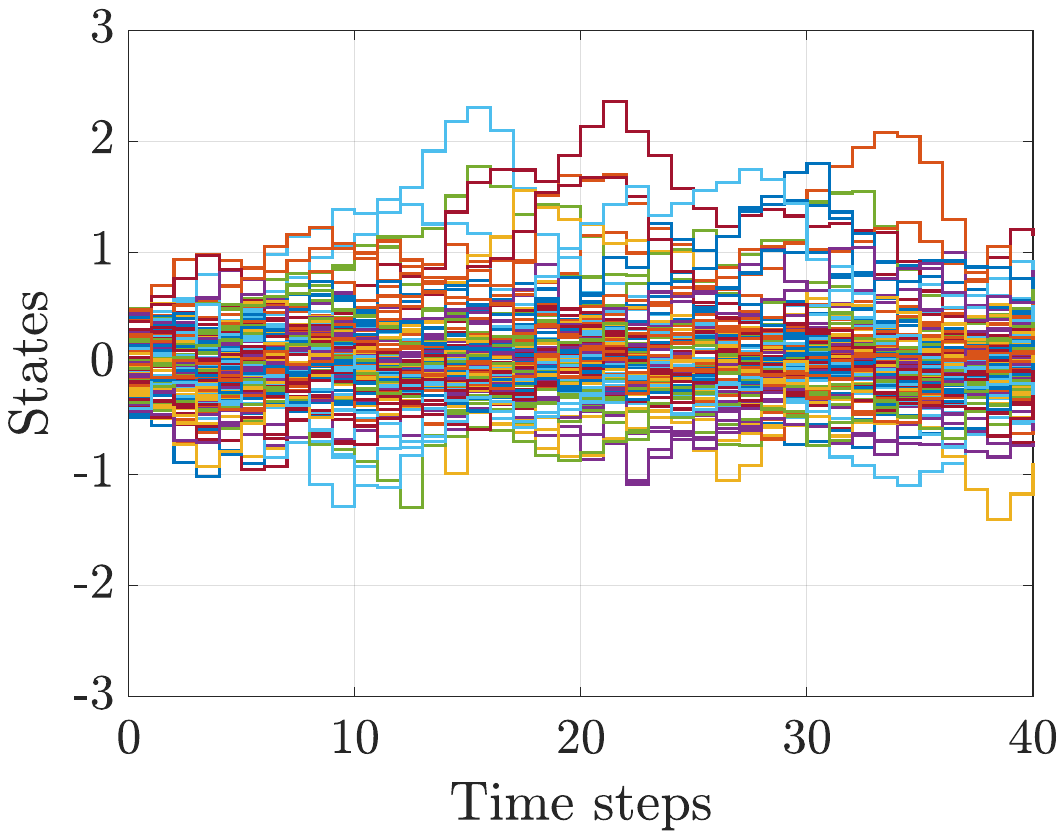}
	}\hspace{0.1cm}
	\subfloat[\label{fig:Ad}]{
		\includegraphics[width=0.232\textwidth]{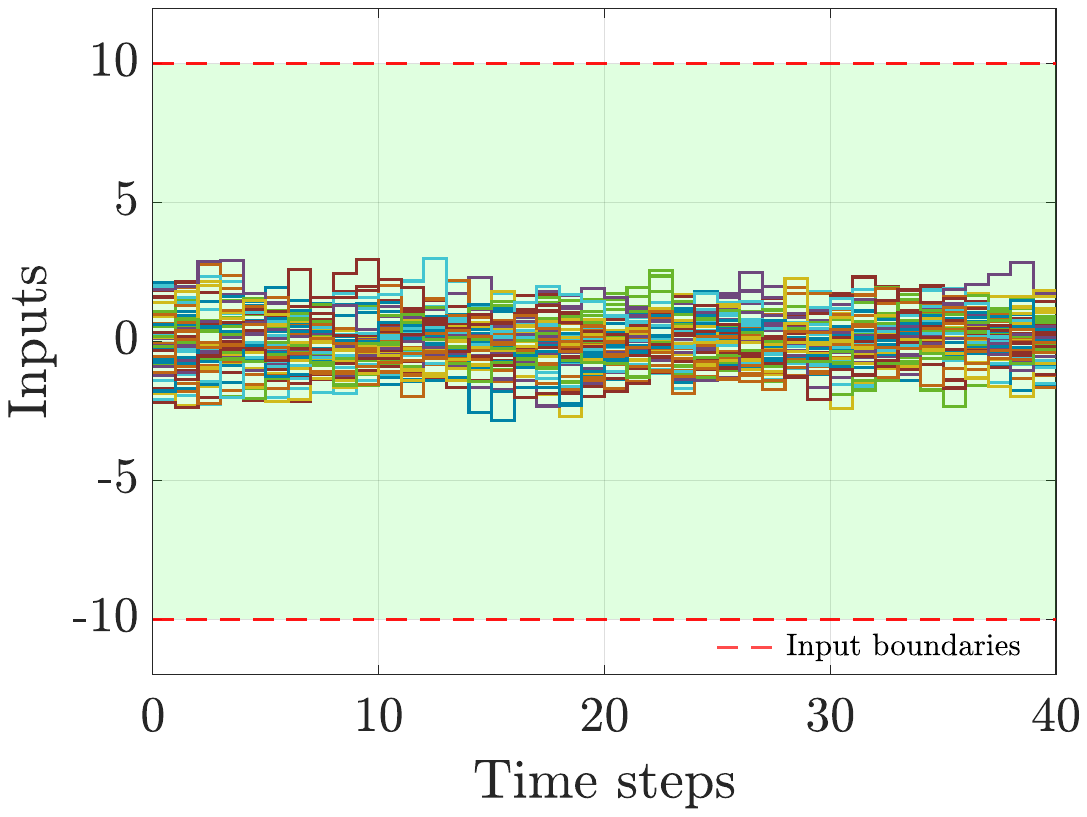}
	}
	\caption{ K-QCBC (academic system). Plot (a) illustrates the trajectories with a random input, while plot (b) displays the trajectories with the designed safety controller in~\eqref{safety-c-1}. Each simulation is generated with $50$ different noise realizations. Plot (c) depicts the trajectories over $40$ time steps, demonstrating compliance with the specified safety property $\Upsilon$. Additionally, plot (d) shows the adherence to the input constraints in \eqref{input_set}.}
	\label{fig:Atraj}
\end{figure*}

\begin{figure*}[t]
	\centering
	\subfloat[\label{fig:Ja}]{
		\includegraphics[width=0.23\textwidth]{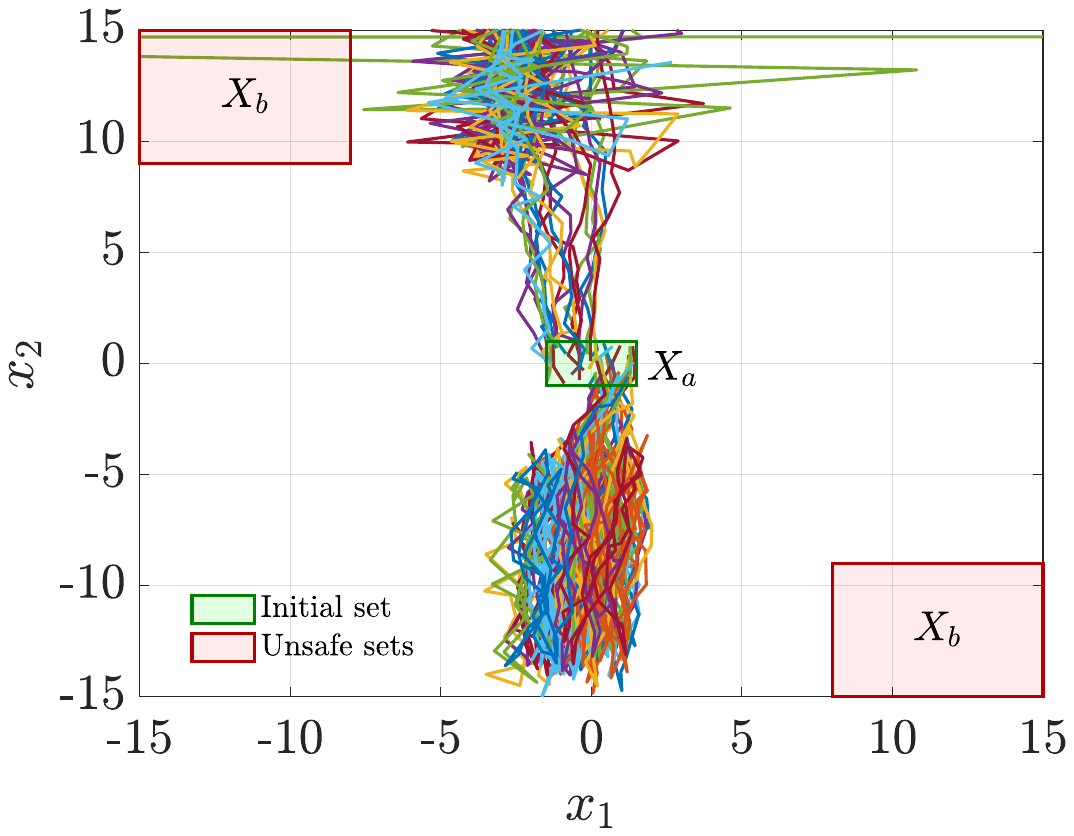}
	}\hspace{0.1cm}
	\subfloat[\label{fig:Jb}]{
		\includegraphics[width=0.23\textwidth]{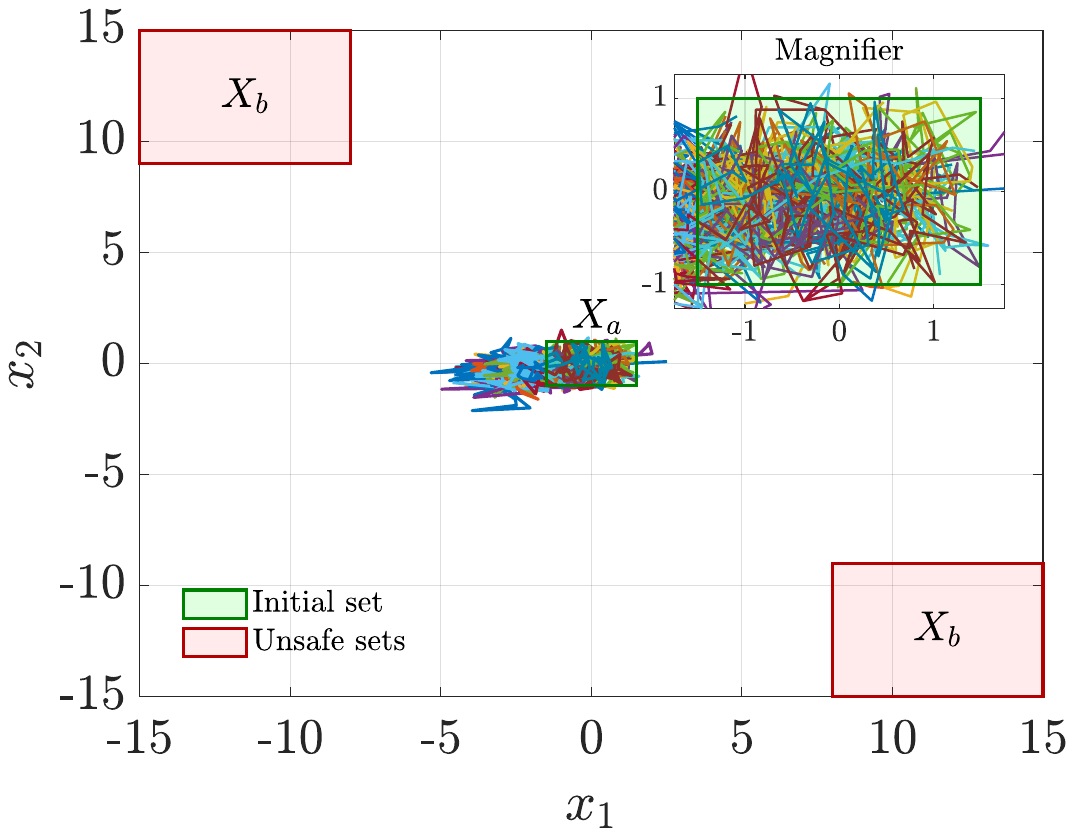}
	}\hspace{0.1cm}
	\subfloat[\label{fig:Jc}]{
		\includegraphics[width=0.23\textwidth]{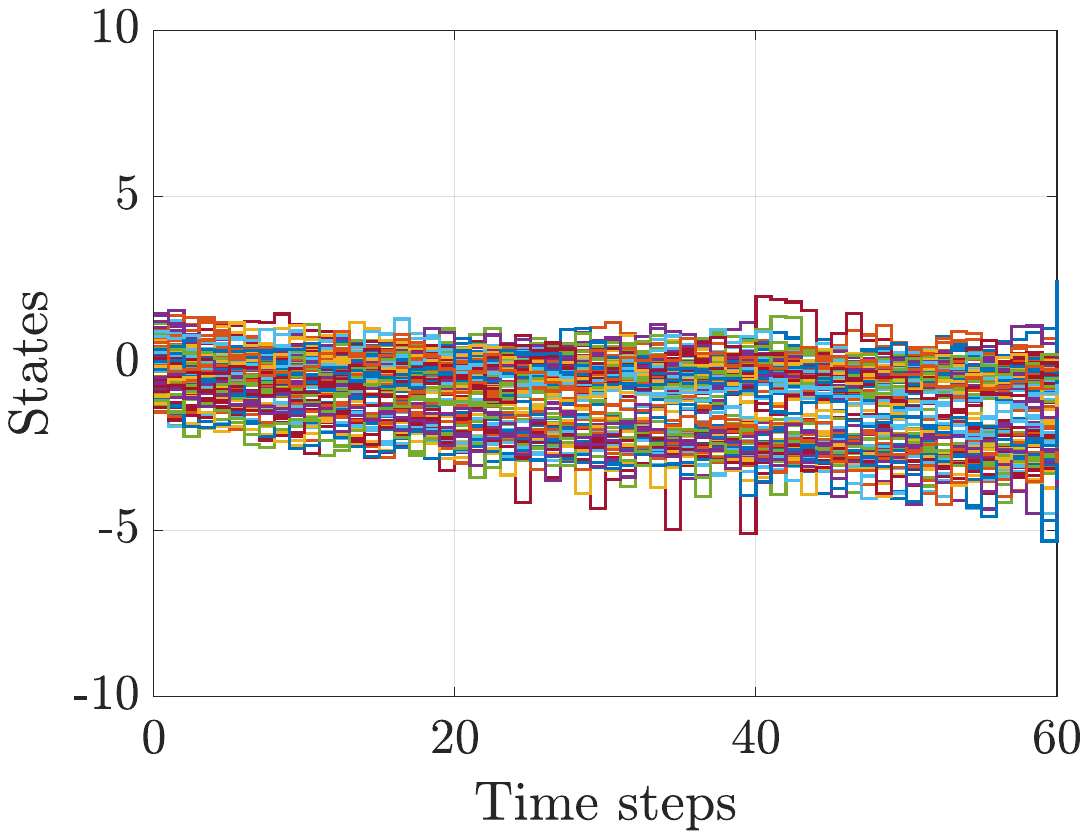}
	}\hspace{0.1cm}
	\subfloat[\label{fig:Jd}]{
		\includegraphics[width=0.23\textwidth]{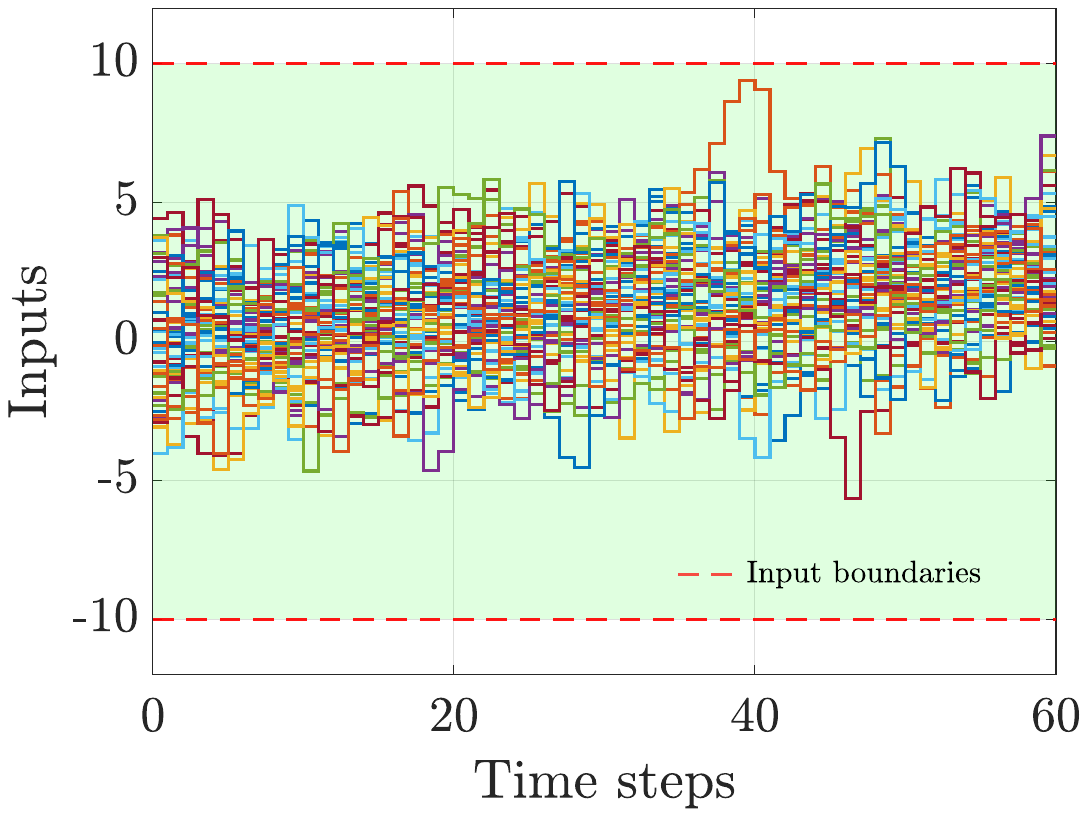}
	}
	\caption{ K-QCBC (jet engine compressor). Plot (a) shows trajectories with random input, while plot (b) includes those with the designed safety controller in~\eqref{safety-c-2}. Each simulation uses $50$ noise realizations. Plot (c) illustrates the compressor's trajectories over $60$ time steps, adhering to the safety property $\Upsilon$, while plot (d) demonstrates compliance with the input constraints in \eqref{input_set}.}
	\label{fig:Jtraj}
\end{figure*}

\begin{figure*}[t]
	\centering
	\subfloat[\label{fig:Sa}]{
		\includegraphics[width=0.23\textwidth]{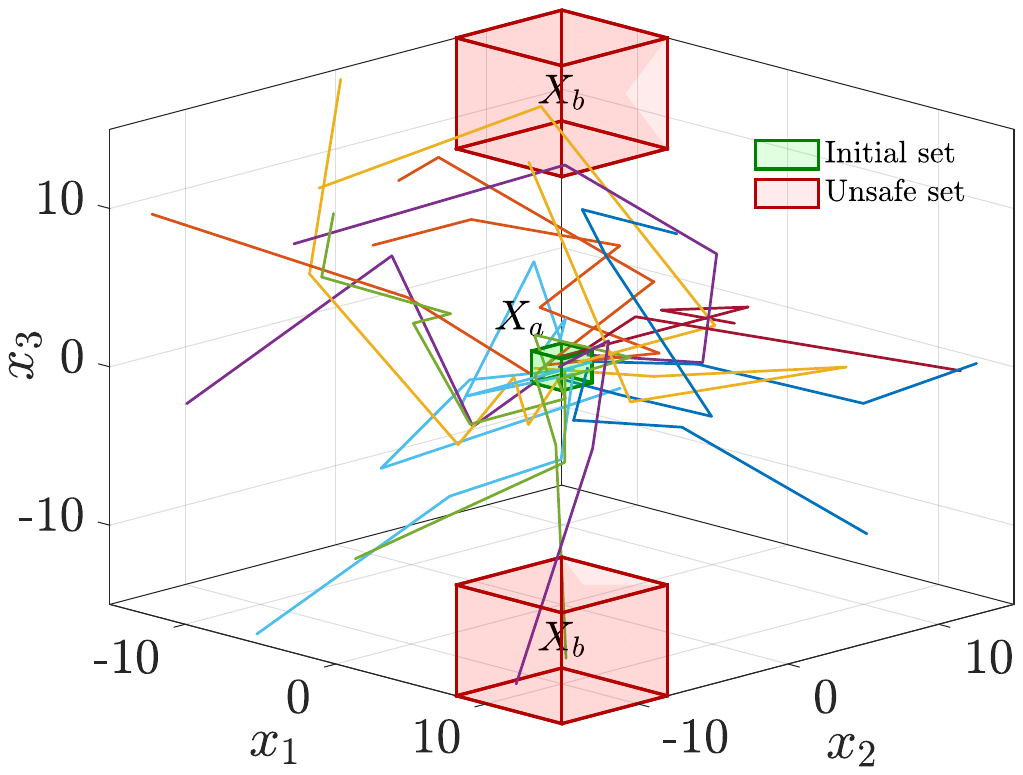}
	}\hspace{0.1cm}
	\subfloat[\label{fig:Sb}]{
		\includegraphics[width=0.23\textwidth]{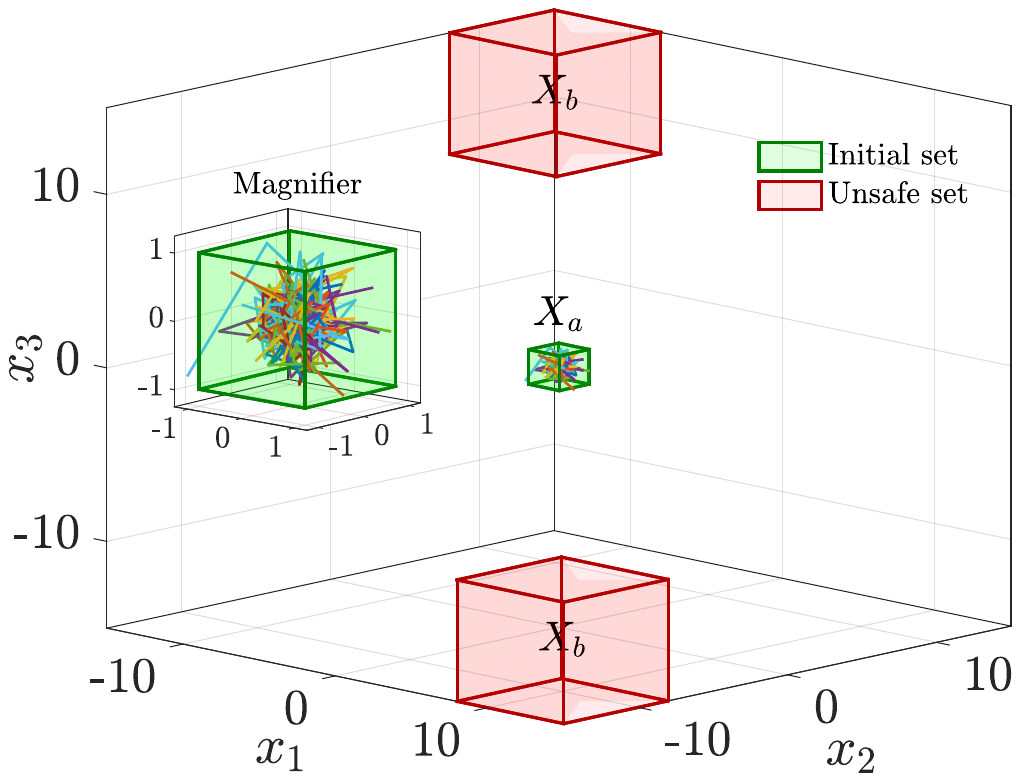}
	}\hspace{0.1cm}
	\subfloat[\label{fig:Sc}]{
		\includegraphics[width=0.23\textwidth]{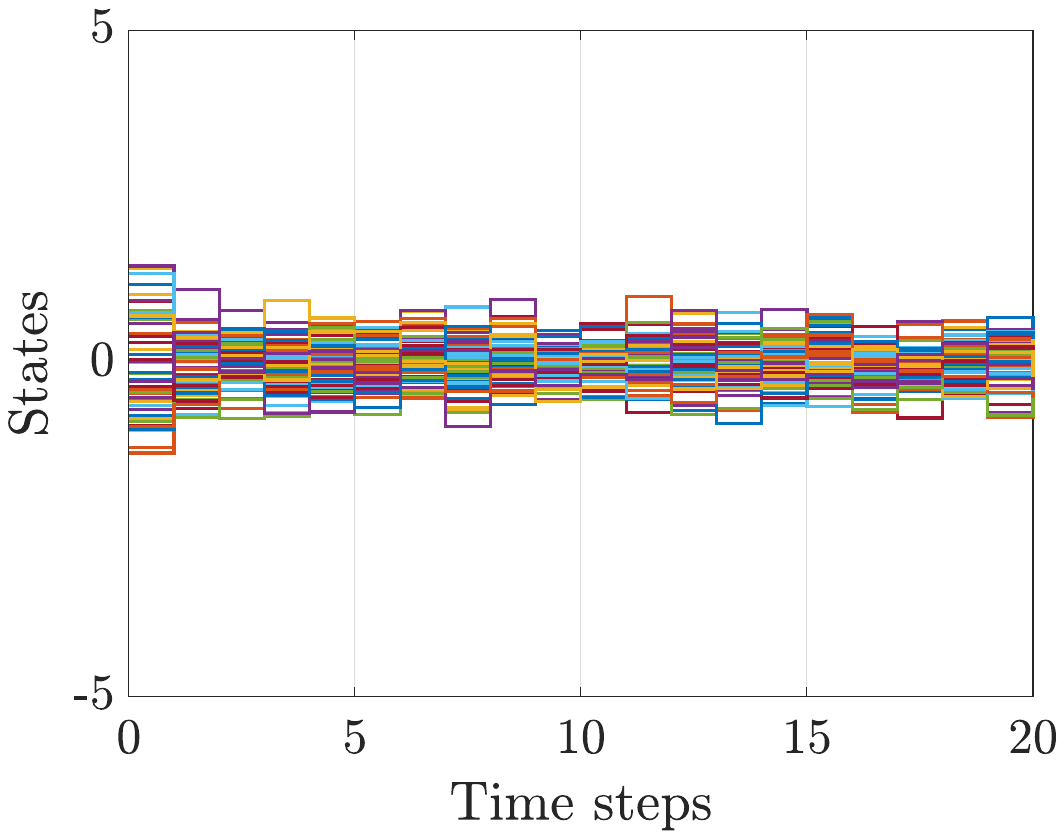}
	}\hspace{0.1cm}
	\subfloat[\label{fig:Sd}]{
		\includegraphics[width=0.23\textwidth]{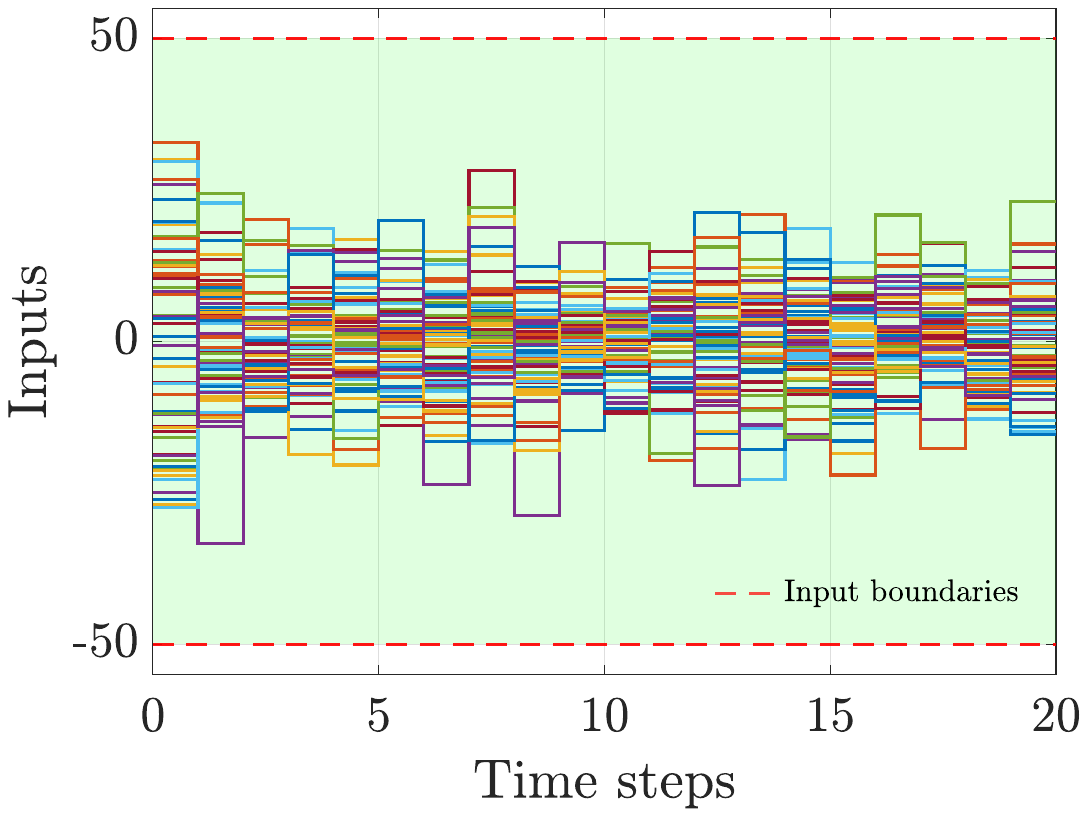}
	}
	\caption{ K-QCBC (spacecraft). Plot (a) shows the trajectories resulting from a random input, while plot (b) features the trajectories governed by the designed safety controller in~\eqref{safety-c-3}. Each simulation is executed with $20$ distinct noise realizations. Plot (c) illustrates the trajectories over $20$ time steps, confirming compliance with the defined safety property $\Upsilon$. In addition, plot (d) demonstrates the observance of the input constraints outlined in \eqref{input_set}.}  
	\label{fig:Straj}
\end{figure*}

\begin{figure*}[t]
	\centering
	\subfloat[\label{fig:Aa1}]{
		\includegraphics[width=0.23\textwidth]{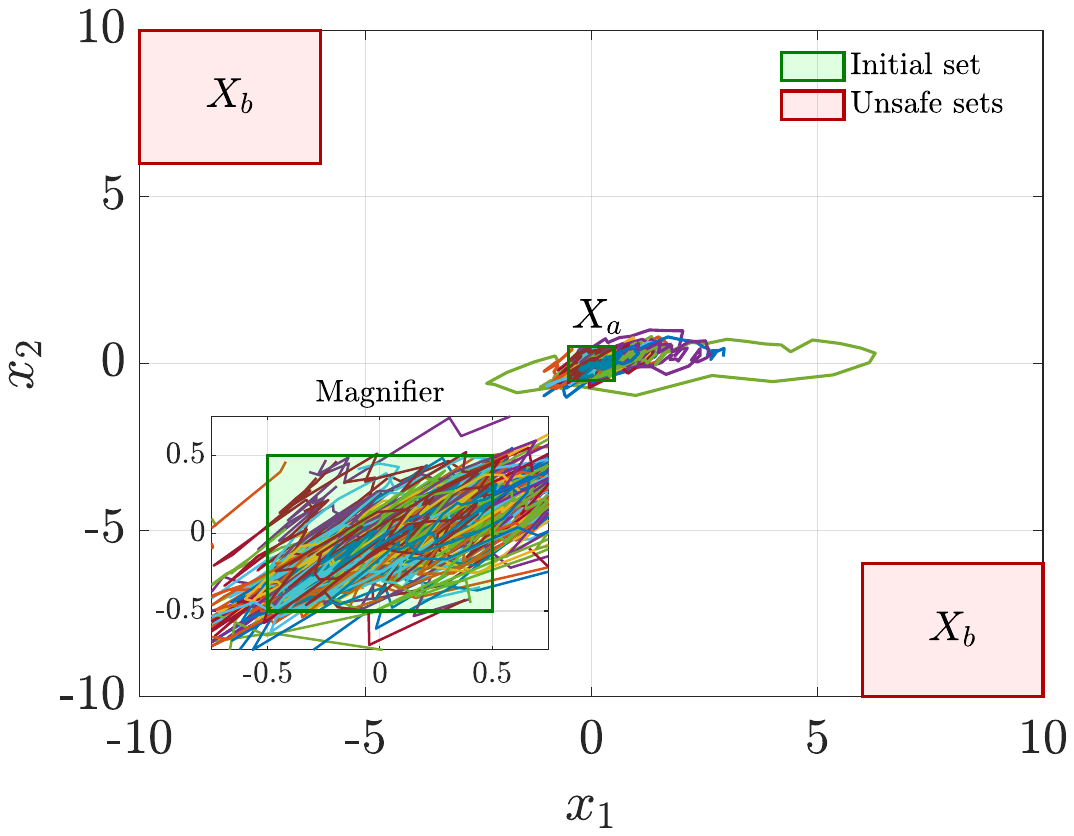}
	}\hspace{0.7cm}
	\subfloat[\label{fig:Ab1}]{
		\includegraphics[width=0.23\textwidth]{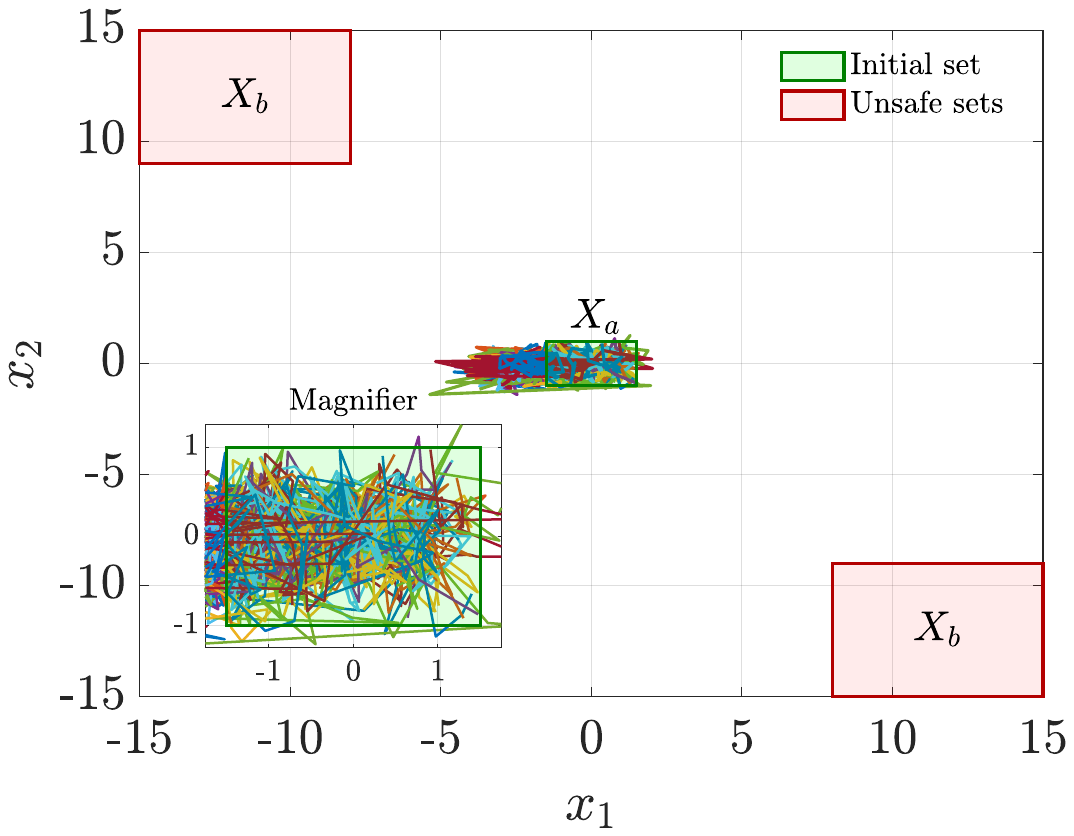}
	}\hspace{0.7cm}
	\subfloat[\label{fig:Ac1}]{
		\includegraphics[width=0.23\textwidth]{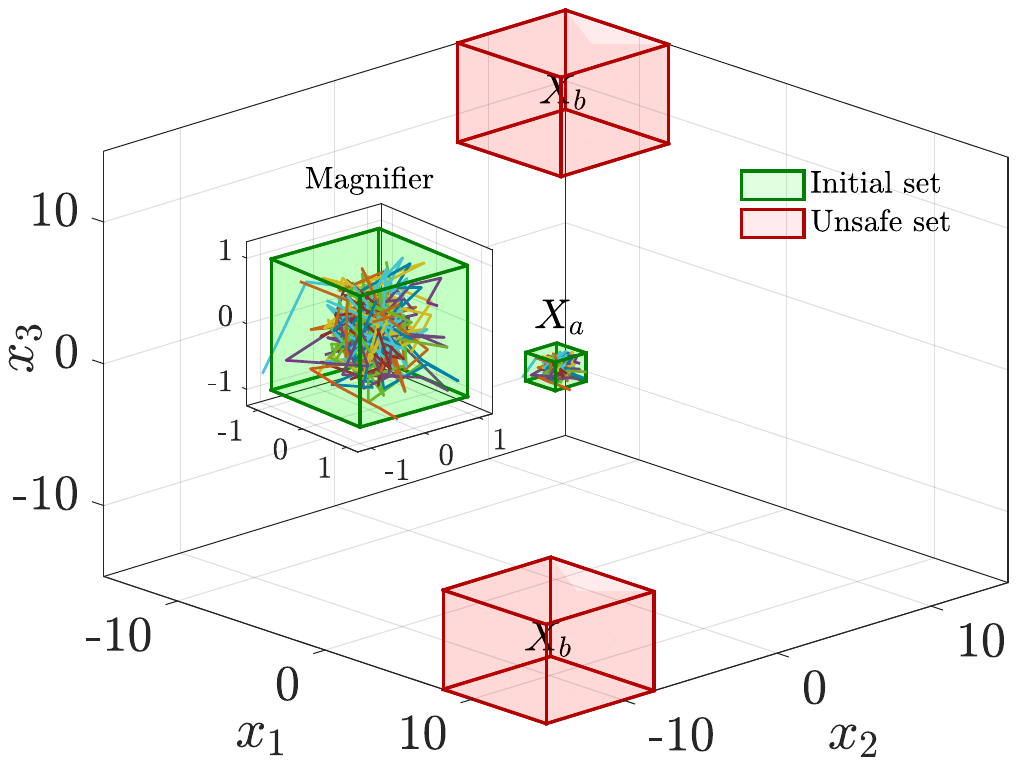}
	}
	\caption{K-PCBC (all three case studies). Plot (a) shows the academic system's trajectories with the safety controller in \eqref{safety-C-KPCBC1}. Plot (b) presents the jet engine compressor's trajectories using the safety controller in \eqref{safety-C-KPCBC2}. Plot (c) depicts the spacecraft's trajectories with the safety controller in \eqref{safety-C-KPCBC3}.}
	\label{fig:Atraj-PCBC}
\end{figure*}

\section{Case Studies}\label{sec:Case}
We demonstrate the effectiveness of our proposed approaches on a set of case studies, which include an academic system, a jet engine compressor~\citep{Tabuada-Jet}, and a spacecraft~\citep{khalil2002control}, while adapting them to accommodate delays. The primary objective across all case studies is to design K-QCBC and K-PCBC, along with their corresponding safety controllers, by employing the detailed steps of Algorithm~\ref{Alg1} and Algorithm~\ref{Alg2}. A summary of the key results is presented in Table~\ref{tab:system-configurations}. According to this table, as discussed in Discussion subsection, while K-PCBC offers a higher lower bound on the safety probability compared to K-QCBC, it also results in increased running time ($\mathsf{RT}$) and memory usage ($\mathsf{MU}$) for solving the SOS problem for each system using Algorithm~\ref{Alg2}. It is worth noting that, despite the conservatism of conditions \eqref{con2}-\eqref{con4} corresponding to the approach for handling input constraints for K-QCBC, the set of proposed conditions in \eqref{K-PCBC} remains feasible for all case studies.
All simulations were conducted on a \textsf{MacBook M2 chip} with 32~\textsf{GB} of memory. 

\subsection{Case study 1: Academic system}
We consider an academic system whose dynamics are described in Appendix~\ref{A1}. The regions of interest are given by $X= [-10,10]^2$, $X_{a} = [-0.5,0.5]^2$, $X_{b} = [6,10]\times[-10,6] \cup [-10,-6]\times[6,10],$ and $U=[-10,10].$  Through the satisfaction of conditions  \eqref{con2}, \eqref{con3}, and \eqref{L1}-\eqref{L4}, following the steps outlined in Algorithm~\ref{Alg1}, we compute K-QCBC matrices
\begin{align*}
P =	\begin{bmatrix}
	0.01 & 0\\ 0 & 0.01
\end{bmatrix}\!\!, \quad 
P_1=\begin{bmatrix}
	0.005 & 0\\ 0 & 0.005
\end{bmatrix}\!\!,
\end{align*}
and the corresponding safety controller as
\begin{align}\notag
u_k&=
0.00036 x_{k_1}^2
- 0.00041 x_{k_1} x_{k_2}
- 0.01 x_{k_1} x_{(k-3)_1}
\\\notag & ~~~- 0.003 x_{k_2}^2
-\! 0.001 x_{k_2} x_{(k-3)_1}
+ 0.01 x_{(k-3)_1}^2
\\\notag
&~~~- 0.001 x_{(k-3)_1} x_{(k-3)_2}
-\! 0.004 x_{(k-3)_2}^2
\!\!-\! 0.06 x_{k_1}\\\label{safety-c-1}
&~~~
\!\!-\! 1.57 x_{k_2}
\!+\! 0.05 x_{(k-3)_1}
\!\!-\! 0.09 x_{(k-3)_2},
\end{align}
as well as level sets $\gamma_a=0.01$, $\gamma_b=0.64$ and parameter $\eta=0.001$. According to Proposition~\ref{Th:safety}, the lower bound on the safety probability is {\(1-\mu_h=0.9 \)} over the horizon $\mathcal{T}=40$. Following Algorithm~\ref{Alg2}, we obtain the K-PCBC corresponding polynomials $g(x)$, $\tilde{g}(x_i)$, with the degree of $4$ and the safety controller as detailed in Appendix~\ref{A1}. Furthermore, the level sets $\gamma_a=0.1$, $\gamma_b=11.08$, and the parameter $\eta=6.31 \times 10^{-4}$ have been computed. According to Proposition~\ref{Th:safety}, the safety probability is guaranteed to be at least {\(1-\mu_h = 0.98\)}
over a horizon of \(\mathcal{T} = 40\) time steps. The simulation results for the academic system in \eqref{academic} using the K-QCBC and K-PCBC approaches are presented in Figures~\ref {fig:Atraj} and \ref {fig:Atraj-PCBC}, respectively.

\subsection{Case study 2: Jet engine compressor}
In the second case study, we consider a physical system, namely a jet engine compressor whose dynamics are described in Appendix~\ref{A2}. In contrast to the first case study, where the maximum degree of the system's monomials was $2$, the maximum degree for the jet engine compressor is $3$.  The regions of interest are given by $X= [-15,15]^2$, $X_{a} = [-1.5,\allowbreak1.5] \times [-1,1]$, $X_{b} = [8,15]\times[-15,-9] \cup [-15,-8] \times [9,15]$, and $U=[-10,10].$ By satisfying the conditions in \eqref{con2}, \eqref{con3}, and \eqref{L1}-\eqref{L4}, we compute the K-QCBC matrices using Algorithm~\ref{Alg1} as
\begin{align*}
P =	\begin{bmatrix}
	0.005 & 0\\ 0 & 0.005
\end{bmatrix}\!\!, 
\quad P_1=\begin{bmatrix}
	0.002 & 0\\ 0 & 0.002
\end{bmatrix}\!\!,
\end{align*}
and the corresponding safety controller as
\begin{align}\notag
u_k &= 0.007x_{k_1}^2 x_{k_2}
+ 0.008x_{k_2}^3
- 4.02x_{k_2} \\\notag&
~~~
+0.0008 x_{(k-4)_2}^2 x_{(k-4)_1} +\!0.0007,x_{(k-4)_1}^3
\!\!\\\label{safety-c-2}&
~~~- \! 0.0001x_{(k-4)_2} x_{(k-4)_1}\!\!- \! 0.39874x_{(k-4)_1},
\end{align}
along with $\gamma_a=0.01$, $\gamma_b=0.83$ and $\eta=7.65 \times 10^{-4}$. Based on Proposition~\ref{Th:safety},  {\(1- \mu_h =0.92\)} over $\mathcal{T}=60$. Following Algorithm~\ref{Alg2}, we compute the K-PCBC polynomials $g(x)$, $\tilde{g}(x_i)$, with the degree of $4$ and the associated safety controller, as detailed in Appendix~\ref{A2}, and with $\gamma_a = 0.21$, $\gamma_b = 10.73$, and $\eta = 5.81 \times 10^{-4}$. According to Proposition~\ref{Th:safety}, the safety probability is guaranteed to be at least {$1-\mu_h = 0.97$} for $\mathcal{T} = 60$. The simulation results  for the jet engine compressor using the K-QCBC and K-PCBC approaches are displayed in Figure~\ref{fig:Jtraj} and \ref {fig:Atraj-PCBC}, respectively.
\subsection{Case study 3: Spacecraft}
In the third case study, we examine a spacecraft whose dynamics are described in Appendix~\ref{A3}. Unlike the first two case studies, which feature two-dimensional dynamics, the spacecraft’s dynamics are three-dimensional. The regions of interest are given by $X= [-15,15]^3$, $X_{a} = [-1,\allowbreak1]^3$, $X_{b} = [8,15]\times[-15,-8]^2 \cup [-15,-8]\times[8,15]^2$, and $U=[-50,50]^3.$ By fulfilling the conditions specified in \eqref{con2}, \eqref{con3}, and \eqref{L1}-\eqref{L4}, we compute the K-QCBC matrices using Algorithm~\ref{Alg1} as follows
\begin{align*}
P &=	\begin{bmatrix}
	0.01 & 0 & 0\\ 0 & 0.02 & 0\\0 & 0 & 0.01
\end{bmatrix}\!\!,\,
\quad P_1 =\begin{bmatrix}
	0.009 & 0 & 0\\ 0 & 0.02 &0\\ 0 & 0 & 0.009
\end{bmatrix}\!\!,
\end{align*}
and the corresponding safety controller as
\begin{align}\notag
u_{k_1}&=
9.33x_{k_2}x_{k_3}
- 19.81x_{k_1},\\\notag
u_{k_2} &= 0.0001x_{k_1}x_{k_3}
- 19.98x_{k_2},\\\label{safety-c-3}
u_{k_3}&= - 0.07x_{k_1}x_{k_2}  - 29.93x_{k_3},		
\end{align}
as well as  \( \gamma_a =0.25\), \( \gamma_b=3.77 \), and  $\eta=0.003$. According to Proposition~\ref{Th:safety}, {\(1- \mu_h=0.91 \)} over \( \mathcal{T}=20 \). Using Algorithm~\ref{Alg2}, we compute the K-PCBC, with the degree of $4$ and the safety controller as detailed in Appendix~\ref{A3}, where $\gamma_a = 0.76$, $\gamma_b = 38.22$, and $\eta = 5.67 \times 10^{-4}$. Under Proposition~\ref{Th:safety}, {$1-\mu_h = 0.97$} over $\mathcal{T} = 20$. The simulation results for spacecraft using the K-QCBC and K-PCBC are presented in Figure~\ref{fig:Straj} and \ref {fig:Atraj-PCBC}, respectively.

\section{Conclusion}\label{sec:Conclusion}
This paper introduced a \emph{Krasovskii}-based control barrier certificate framework for discrete-time stochastic nonlinear polynomial systems with time-invariant delays. By extending the conventional delay-free CBC formulation to explicitly account for delayed dynamics, we developed two classes of barrier certificates: \emph{Krasovskii quadratic} and \emph{Krasovskii polynomial} CBC. These formulations enable the synthesis of safety controllers that provide probabilistic safety guarantees robust to system delays. The proposed safety conditions were reformulated as SOS optimization problems. Future work will focus on extending this framework to synthesize safety certificates and controllers for time-delayed systems with dynamics beyond polynomial classes.

\bibliographystyle{agsm}
\bibliography{biblio}

\section{Appendix}
\subsection{Academic system}\label{A1}
We consider an academic system with the dynamics described as
\begin{align}\notag
{x}_{(k+1)_1}& \!=\! x_{k_1} \!+\! 0.1x_{k_1} x_{k_2}+0.2 x_{(k-3)_1} x_{k_2} \! - \! 0.1x_{(k-3)_1} \\\notag &\,\,\, \,\,+0.12w_{k_1} + 0.14w_{k_2},\\\notag
{x}_{(k+1)_2} & \!=\! x_{k_2} \!-\! 0.05x_{k_1}+0.1x_{(k-3)_2} \!+\! 0.1u_{k} \!+\! 0.11w_{k_1} \\\label{academic}&~~~+0.15w_{k_2},
\end{align}
{with delay $h=3$}. The system in \eqref{academic} can be expressed in the form of dt-SNPS-td in Definition~\ref{def:dt-SNPS-td} along with the relevant matrices as 
\begin{align*}
\mathcal{A}&= \begin{bmatrix}
	1 + 0.1x_{k_2} & 0.1x_{(k-3)_1} \\
	-0.05 & 1
\end{bmatrix}\!\!,
 \mathcal{A}_1=	\begin{bmatrix}
	-0.1 + 0.1x_{k_2} & 0 \\
	0 & 0.1
\end{bmatrix}\!\!,\\ 
\mathcal{G}&=	\begin{bmatrix}
	0 \\
	0.1
\end{bmatrix}\!\!,
\quad E= \begin{bmatrix}
	0.12 && 0.14 \\
	0.11 && 0.15
\end{bmatrix}\!\!.
\end{align*}

In the following, we present the designed K-PCBC polynomials and its associated safety controller for this academic system as
\begin{align}\notag
	g(x) &= 0.131 x_{1}^{4}
	- 0.014 x_{1}^{3} x_{2}
	+ 0.005 x_{1}^{3}\notag\\
	&\quad + 0.121 x_{1}^{2} x_{2}^{2}
	- 0.004 x_{1}^{2} x_{2}
	- 0.054 x_{1}^{2}\notag\\
	&\quad - 0.014 x_{1} x_{2}^{3}
	+ 0.004 x_{1} x_{2}^{2}
	+ 0.006 x_{1} x_{2}\notag\\
	&\quad - 0.002 x_{1}
	+ 0.131 x_{2}^{4}
	- 0.005 x_{2}^{3}\notag\\
	&\quad - 0.054 x_{2}^{2}
	+ 0.002 x_{2}
	+ 0.013,\notag\\[1mm]
	\tilde{g}(x_i) &= 0.174 x_{i_1}^{4}
	- 0.011 x_{i_1}^{3} x_{i_2}
	+ 0.005 x_{i_1}^{3}\notag\\
	&\quad + 0.127 x_{i_1}^{2} x_{i_2}^{2}
	- 0.003 x_{i_1}^{2} x_{i_2}
	- 0.051 x_{i_1}^{2}\notag\\
	&\quad - 0.011 x_{i_1} x_{i_2}^{3}
	+ 0.003 x_{i_1} x_{i_2}^{2}
	+ 0.005 x_{i_1} x_{i_2}\notag\\
	&\quad - 0.002 x_{i_1}
	+ 0.174 x_{i_2}^{4}
	- 0.005 x_{i_2}^{3}\notag\\
	&\quad - 0.051 x_{i_2}^{2}
	+ 0.002 x_{i_2}
	+ 0.020,\notag\\[1mm]
	u_k &= -0.0003x_{k_1}^{2}
	+ 0.0001x_{k_2}^{2}
	+ 0.41 x_{k_1}\notag\\
	&\quad - 2.54 x_{k_2}
	- 0.03 x_{(k-3)_1}^{2}
	- 0.003 x_{(k-3)_2}^{2}\notag\\
	&\quad + 0.003 x_{k_2} x_{(k-3)_1}^{2}.\label{safety-C-KPCBC1}
\end{align}

\subsection{Jet engine compressor}\label{A2}
The jet engine compressor is defined by
\begin{align}\notag
{x}_{(k+1)_1}& \!=\! x_{k_1} \!-\! 0.1x_{(k-4)_2} \!-\! 0.15 x^2_{(k-4)_1} \!\!-\! 0.05 x^3_{k_1} \!\!+\! {0.3}w_{k_1},\\\label{jet}
{x}_{(k+1)_2} & \!=\! x_{k_2} \!+\! 0.1x_{(k-4)_1}  \!+\! 0.1u_{k} \!+\! {0.33}w_{k_2},
\end{align}
{where ${x}_{k_1}$ is the mass flow, ${x}_{k_2}$ is the pressure rise, $u_k$ corresponds to the throttle mass flow, the control input, and
delay $h=4$}. The system described in \eqref{jet} can be represented in the dt-SNPS-td format outlined in Definition~\ref{def:dt-SNPS-td} as
\begin{align*}
\mathcal{A}&= \begin{bmatrix}
	1 - 0.05x^2_{k_1} & 0 \\
	0& 1
\end{bmatrix}\!\!,
\quad\mathcal{A}_1=	\begin{bmatrix}
	-0.15x_{(k-4)_1} & -0.1 \\
	0.1 & 0
\end{bmatrix}\!\!,\\ 
\mathcal{G}&=	\begin{bmatrix}
	0 \\
	0.1
\end{bmatrix}\!\!,
\quad E= \begin{bmatrix}
	0.3 && 0 \\
	0 && 0.33
\end{bmatrix}\!\!.
\end{align*}

Here, we present the designed K-PCBC polynomials and its corresponding safety controller as follows:
\begin{align}\notag
	g(x) &= 0.135 x_{1}^{4}
	- 0.010 x_{1}^{3} x_{2}
	+ 0.007 x_{1}^{3}\notag\\
	&\quad + 0.127 x_{1}^{2} x_{2}^{2}
	- 0.004 x_{1}^{2} x_{2}
	- 0.049 x_{1}^{2}\notag\\
	&\quad - 0.010 x_{1} x_{2}^{3}
	+ 0.004 x_{1} x_{2}^{2}
	+ 0.004 x_{1} x_{2}\notag\\
	&\quad - 0.002 x_{1}
	+ 0.135 x_{2}^{4}
	- 0.007 x_{2}^{3}\notag\\
	&\quad - 0.049 x_{2}^{2}
	+ 0.002 x_{2}
	+ 0.016,\notag\\[1mm]
	\tilde{g}(x_i) &= 0.178 x_{i_1}^{4}
	- 0.008 x_{i_1}^{3} x_{i_2}
	+ 0.007 x_{i_1}^{3}\notag\\
	&\quad + 0.148 x_{i_1}^{2} x_{i_2}^{2}
	- 0.003 x_{i_1}^{2} x_{i_2}
	- 0.039 x_{i_1}^{2}\notag\\
	&\quad - 0.008 x_{i_1} x_{i_2}^{3}
	+ 0.004 x_{i_1} x_{i_2}^{2}
	+ 0.004 x_{i_1} x_{i_2}\notag\\
	&\quad - 0.002 x_{i_1}
	+ 0.177 x_{i_2}^{4}
	- 0.006 x_{i_2}^{3}\notag\\
	&\quad - 0.039 x_{i_2}^{2}
	+ 0.002 x_{i_2}
	+ 0.025,\notag\\[1mm]
	u_k &= 0.01 x_{1}^{3}
	+ 0.005 x_{k_2}^{3}
	- 5.17 x_{k_2}\notag\\
	&\quad + 0.006 x_{(k-4)_1} x_{(k-4)_2}^{2}
	+ 0.005 x_{(k-4)_1}^{3}\notag\\
	&\quad - 0.005 x_{(k-4)_1} x_{(k-4)_2}
	- 0.69 x_{(k-4)_1}.\label{safety-C-KPCBC2}
\end{align}

\subsection{Spacecraft}\label{A3}
The spacecraft is characterized by
\begin{align}
	\begin{split}\label{space} 
		{x}_{(k+1)_1} &\!=\! 	{x}_{k_1} + \frac{J_{2} - J_{3}}{J_{1}}\, x_{k_2}\, x_{k_3} 
		+ \frac{1}{J_1}\, u_{k_1}  + {0.3}w_{k_1}, \\
		{x}_{(k+1)_2} &\!=\!  {x}_{k_2} \!+\! \frac{J_{3} - J_{1}}{J_{2}}\, x_{(k-3)_1}\, x_{(k-3)_3} 
		\!+\! \frac{1}{J_{2}}\, u_{k_2}  \!+\! {0.3}w_{k_2}, \\
		{x}_{(k+1)_3} &\!=\!  {x}_{k_3} + \frac{J_{1} - J_{2}}{J_{3}}\, x_{k_1}\, x_{(k-3)_2} 
		+ \frac{1}{J_{3}}\, u_{k_3}  + {0.3}w_{k_2},
	\end{split}
\end{align}
with delay $h=3$. Additionally, $u_k=\left[u_{k_1} ; u_{k_2} ; u_{k_3}\right]$ is the torque input, and $J_{1}$ to $J_{3}$ are the principal moments of inertia. The system in \eqref{space} can be expressed in the form of dt-SNPS-td as defined in Definition~\ref{def:dt-SNPS-td} as
\begin{align*}
\mathcal{A} &= \begin{bmatrix}  1& 0 &\frac{J_{2} - J_{3}}{J_{1}}\, x_{k_2} \\0 & 1& 0  \\0& 0 & 1\end{bmatrix}\!\!, \mathcal{A}_1 = \begin{bmatrix}  0& 0 & 0  \\\frac{J_{3} - J_{1}}{J_{2}} x_{(k-3)_3}  & 0& 0  \\0& \frac{J_{1} - J_{2}}{J_{3}}\, x_{k_1} & 0\end{bmatrix}\!\!, \\
\mathcal{G} &=  \begin{bmatrix} \frac{1}{J_{1}}  & 0  & 0\\ 0 &  \frac{1}{J_{2}} &0 \\ 0 &0 &\frac{1}{J_{3}}    \end{bmatrix}\!\!,
\quad E =  \begin{bmatrix} 0.3 && 0  && 0\\ 0 &&  0.3 &&0 \\ 0 &&0 &&0.3    \end{bmatrix}\!\!.
\end{align*}

In the following, we provide the designed K-PCBC polynomials and its associated safety controller for this spacecraft systems as
\begin{align}\notag
	g(x) &=
	0.064 x_{1}^{4}
	- 0.006 x_{1}^{3} x_{2}
	- 0.006 x_{1}^{3} x_{3}\\\notag
	&\quad + 0.003 x_{1}^{3}
	+ 0.040 x_{1}^{2} x_{2}^{2}
	+ 0.004 x_{1}^{2} x_{2} x_{3}\\\notag
	&\quad - 0.001 x_{1}^{2} x_{2}
	+ 0.039 x_{1}^{2} x_{3}^{2}
	- 0.001 x_{1}^{2} x_{3}\\\notag
	&\quad - 0.124 x_{1}^{2}
	- 0.006 x_{1} x_{2}^{3}
	- 0.004 x_{1} x_{2}^{2} x_{3}\\\notag
	&\quad + 0.001 x_{1} x_{2}^{2}
	- 0.004 x_{1} x_{2} x_{3}^{2}
	+ 0.013 x_{1} x_{2}\\\notag
	&\quad - 0.006 x_{1} x_{3}^{3}
	+ 0.001 x_{1} x_{3}^{2}
	+ 0.012 x_{1} x_{3}\\\notag
	&\quad - 0.004 x_{1}
	+ 0.064 x_{2}^{4}
	+ 0.006 x_{2}^{3} x_{3}\\\notag
	&\quad - 0.003 x_{2}^{3}
	+ 0.039 x_{2}^{2} x_{3}^{2}
	- 0.001 x_{2}^{2} x_{3}\\\notag
	&\quad - 0.124 x_{2}^{2}
	+ 0.006 x_{2} x_{3}^{3}
	- 0.001 x_{2} x_{3}^{2}\\\notag
	&\quad - 0.013 x_{2} x_{3}
	+ 0.005 x_{2}
	+ 0.063 x_{3}^{4}\\\notag
	&\quad - 0.003 x_{3}^{3}
	- 0.123 x_{3}^{2}
	+ 0.005 x_{3}
	+ 0.149,\\\notag
	\tilde{g}(x_i) &=
	0.084 x_{i_1}^{4}
	- 0.006 x_{i_1}^{3} x_{i_2}
	- 0.005 x_{i_1}^{3} x_{i_3}\\\notag
	&\quad + 0.002 x_{i_1}^{3}
	+ 0.032 x_{i_1}^{2} x_{i_2}^{2}
	+ 0.003 x_{i_1}^{2} x_{i_2} x_{i_3}\\\notag
	&\quad - 0.001 x_{i_1}^{2} x_{i_2}
	+ 0.032 x_{i_1}^{2} x_{i_3}^{2}
	- 0.001 x_{i_1}^{2} x_{i_3}\\\notag
	&\quad - 0.119 x_{i_1}^{2}
	- 0.006 x_{i_1} x_{i_2}^{3}
	- 0.003 x_{i_1} x_{i_2}^{2} x_{i_3}\\\notag
	&\quad + 0.001 x_{i_1} x_{i_2}^{2}
	- 0.003 x_{i_1} x_{i_2} x_{i_3}^{2}
	+ 0.013 x_{i_1} x_{i_2}\\\notag
	&\quad - 0.005 x_{i_1} x_{i_3}^{3}
	+ 0.001 x_{i_1} x_{i_3}^{2}
	+ 0.013 x_{i_1} x_{i_3}\\\notag
		&\quad - 0.004 x_{i_1}
		+ 0.084 x_{i_2}^{4}
		+ 0.006 x_{i_2}^{3} x_{i_3}\\\notag
		&\quad - 0.003 x_{i_2}^{3}
		+ 0.032 x_{i_2}^{2} x_{i_3}^{2}
		- 0.001 x_{i_2}^{2} x_{i_3}\\\notag
		&\quad - 0.119 x_{i_2}^{2}
		+ 0.005 x_{i_2} x_{i_3}^{3}
		- 0.001 x_{i_2} x_{i_3}^{2}\\\notag
		&\quad - 0.013 x_{i_2} x_{i_3}
		+ 0.005 x_{i_2}
		+ 0.084 x_{i_3}^{4}\\\notag
		&\quad - 0.003 x_{i_3}^{3}
		- 0.118 x_{i_3}^{2}
		+ 0.005 x_{i_3}
		+ 0.163,\\\notag
u_{k_1} &= 5.15x_{k_1}^{2} - 14.54x_{k_1},\\\notag
u_{k_2} &= 0.001x_{k_3}^{2} - 12.43x_{k_2} - 0.06x_{k_3} x_{(k-3)_1},\\\label{safety-C-KPCBC3}
u_{k_3} &= - 0.02x_{k_1}^2- 10.53x_{k_3} + 0.01x_{(k-3)_1}^{2}.
\end{align}

\newpage

\begin{authorbio}[Omid]{Omid Akbarzadeh} is currently a PhD student in the School of Computing at Newcastle University, U.K. His academic journey commenced at Shiraz University, where he obtained a Bachelor of Science in Electrical and Electronic Engineering. Following this, he pursued a master's degree in Communications and Computer Network Engineering (CCNE) at the Polytechnic University of Turin, Italy (Politecnico di Torino). His research interests include safe cyber-physical systems, communication networks, data-driven approaches, and formal control.
\end{authorbio}

\begin{authorbio}[MohammadHossein]{MohammadHossein Ashoori} received his B.Sc. and M.Sc. degrees in Electrical Engineering from Sharif University of Technology (SUT), Tehran, Iran, in 2019 and 2022, respectively. He is currently pursuing his PhD in
the School of Computing at Newcastle University, UK.  His research interests include cyber-physical systems (CPS), computer vision, and digital signal processing.
\end{authorbio}

\begin{authorbio}[Amy]{Amy Nejati} is an Assistant Professor in the School of Computing at Newcastle University in the United Kingdom. Prior to this, she was a Postdoctoral Associate at the Max Planck Institute for Software Systems in Germany from July 2023 to May 2024. She also served as a Senior Researcher in the Computer Science Department at the Ludwig Maximilian University of Munich (LMU) from November 2022 to June 2023. She received the PhD in Electrical Engineering from the Technical University of Munich (TUM) in 2023. She has received the B.Sc. and M.Sc. degrees both in Electrical Engineering. Her line of research mainly focuses on developing efficient (data-driven) techniques to design and control highly-reliable autonomous systems while providing mathematical guarantees. She was selected as a Best Repeatability Prize Finalist at ACM HSCC 2025 and as one of the CPS Rising Stars \end{authorbio}

\newpage

\begin{authorbio}[Abolfazl]{Abolfazl Lavaei} is an Assistant Professor in the School of Computing at Newcastle University, United Kingdom. Between January 2021 and July 2022, he was a Postdoctoral Associate in the Institute for Dynamic Systems and Control at ETH Zurich, Switzerland. He was also a Postdoctoral Researcher in the Department of Computer Science at LMU Munich, Germany, between November 2019 and January 2021. He received the Ph.D. degree in Electrical Engineering from the Technical University of Munich (TUM), Germany, in 2019. He obtained the M.Sc. degree in Aerospace Engineering with specialization in Flight Dynamics and Control from the University of Tehran (UT), Iran, in 2014. He is the recipient of several international awards in the acknowledgment of his work including  Best Repeatability Prize (Finalist) at the ACM HSCC 2025, IFAC ADHS 2024, and IFAC ADHS 2021, HSCC Best Demo/Poster Awards 2022 and 2020, and IFAC Young Author Award Finalist 2019. His line of research primarily focuses on the intersection of Control Theory, Formal Methods, and Statistical Learning Theory.
\end{authorbio}

\end{document}